\renewcommand{\geq}{\geqslant}
\renewcommand{\leq}{\leqslant}
\renewcommand{\epsilon}{\varepsilon}
\newcommand{\calT}{\mathcal{T}}
\newcommand{\bbZ}{\mathbb{Z}}
\newcommand{\expect}[1]{\mathbb{E}\left[{#1}\right]}
\newcommand{\expectinline}[1]{\mathbb{E}[{#1}]}
\newcommand{\twodots}{\ensuremath{\,.\,.\,}}
\newcommand{\VA}{\ensuremath{U}}      
\newcommand{\VB}{\ensuremath{V}}      
\newcommand{\DE}{\ensuremath{\Delta}} 
\newcommand{\IT}[1]{\ensuremath{IT(#1})}
\newcommand{\ANS}{\ensuremath{P}}
\newcommand{\ANSint}[1]{\ensuremath{\ANS_{[{#1}]}}}
\newcommand{\Dint}[1]{\ensuremath{\DE_{[{#1}]}}}
\newcommand{\DNint}[1]{\ensuremath{\DE_{[{#1}]^\textup{c}}}}
\newcommand{\Dfix}{\ensuremath{\DE^{\textup{fix}}_{[t_0,t_1]^\textup{c}}}}
\newcommand{\update}{\ensuremath{\textup{next}}}
\newcommand{\MVL}{\ensuremath{M_{V,\ell}}}
\newcommand{\NPtime}{\ensuremath{\mathbf{NP}}\xspace}
\newcommand{\Xc}{\ensuremath{X[t_0,t_1]^\textup{c}}}
\newcommand{\Xfix}{\ensuremath{X^{\textup{fix}}{[t_0,t_1]^\textup{c}}}}
\newcommand{\Patrascu}{P{\v a}tra{\c s}cu\xspace}
\newtheorem{theorem}{Theorem}
\newtheorem{lemma}[theorem]{Lemma}
\newtheorem{corollary}[theorem]{Corollary}
\newtheorem{fact}[theorem]{Fact}
\newtheorem{problem}[theorem]{Problem}
\newcommand{\inserttreefigure}{
\newcommand{\nodefont}{\scriptsize}
\begin{figure}[t]
\centering
\begin{tikzpicture}[level distance=3.9mm]
\tikzstyle{every node}=[fill=black!100,circle,inner sep=1pt]
\tikzstyle{level 1}=[sibling distance=72.5mm]
\tikzstyle{level 2}=[sibling distance=36mm]
\tikzstyle{level 3}=[sibling distance=18.2mm]
\tikzstyle{level 4}=[sibling distance=9.2mm]
\tikzstyle{level 5}=[sibling distance=4.5mm,set style={{every node}+=[fill=white!15]}]
\node {}
    child {node {}
        child {node {}
            child {node {}
                child {node {}
                    child {node {\nodefont 0}}
                    child {node {\nodefont 1}}
                }
                child {node {}
                    child {node {\nodefont 2}
                    }
                    child {node {\nodefont 3}
                    }
                }
            }
            child {node {}
                child {node {}
                    child {node {\nodefont 4}}
                    child {node {\nodefont 5}}
                }
                child {node {}
                    child {node {\nodefont 6}
                    }
                    child {node {\nodefont 7}
                    }
                }
            }
        }
        child {node {}
            child {node {}
                child {node {}
                    child {node {\nodefont 8}}
                    child {node {\nodefont 9}}
                }
                child {node {}
                    child {node {\nodefont 10}
                    }
                    child {node {\nodefont 11}
                    }
                }
            }
            child {node {}
                child {node {}
                    child {node {\nodefont 12}}
                    child {node {\nodefont 13}}
                }
                child {node {}
                    child {node {\nodefont 14}
                    }
                    child {node {\nodefont 15}
                    }
                }
            }
        }
    }
    child {node[fill=black!00,circle,draw,inner sep=1.5pt] {$v$}
        child {node {}
            child {node {}
                child {node {}
                    child {node {\nodefont 16}}
                    child {node {\nodefont 17}}
                }
                child {node {}
                    child {node {\nodefont 18}
                    }
                    child {node {\nodefont 19}
                    }
                }
            }
            child {node {}
                child {node {}
                    child {node {\nodefont 20}}
                    child {node {\nodefont 21}}
                }
                child {node {}
                    child {node {\nodefont 22}
                    }
                    child {node {\nodefont 23}
                    }
                }
            }
        }
        child {node {}
            child {node {}
                child {node {}
                    child {node {\nodefont 24}}
                    child {node {\nodefont 25}}
                }
                child {node {}
                    child {node {\nodefont 26}
                    }
                    child {node {\nodefont 27}
                    }
                }
            }
            child {node {}
                child {node {}
                    child {node {\nodefont 28}}
                    child {node {\nodefont 29}}
                }
                child {node {}
                    child {node {\nodefont 30}
                    }
                    child {node {\nodefont 31}
                    }
                }
            }
        }
    };
\end{tikzpicture}
\caption{\label{fig:tree}A lower bound tree $\calT$ over $n=32$ operations.}
\end{figure}
}
\newcommand{\insertmultfigure}{
\begin{figure}[t]
    \centering
    \small
    \makebox[3cm]{}%
    \makebox[8.5cm]{$\xleftarrow{\hspace*{3.5cm}}$ \hfill $n$ \hfill $\xrightarrow{\hspace*{3.5cm}}$}%
    \makebox[1cm]{}\\
    \vspace{0.2cm}
    \makebox[2cm]{}%
    \makebox[4cm]{}%
    \makebox[1.5cm]{$\xleftarrow{\hspace*{0.3cm}}$ \hfill $\ell$ \hfill $\xrightarrow{\hspace*{0.3cm}}$}%
    \makebox[4cm]{$\xleftarrow{\hspace*{1.3cm}}$ \hfill $t_0$ \hfill $\xrightarrow{\hspace*{1.3cm}}$}%
    \makebox[1cm]{}\\
    \vspace{0.1cm}
    \makebox[3cm]{}%
    \framebox[3cm]{\phantom{$X'$}}%
    \framebox[1.5cm]{$X'$}%
    \framebox[4cm]{\phantom{$X'$}}%
    \makebox[1cm]{$=X$}\\
    \vspace{0.4cm}
    \makebox[2cm]{}%
    \makebox[6.5cm]{}%
    \makebox[3cm]{$\xleftarrow{\hspace*{0.8cm}}$ \hfill $2\ell$ \hfill $\xrightarrow{\hspace*{0.8cm}}$}%
    \makebox[1cm]{}\\
    \vspace{0.1cm}
    \makebox[3cm]{}%
    \framebox[5.5cm]{\phantom{$Y'$}}%
    \framebox[3cm]{$Y'$}%
    \makebox[1cm]{$=Y$}\\
    \vspace{0.4cm}
    \makebox[2cm]{}%
    \makebox[2.5cm]{}%
    \makebox[1.5cm]{$\xleftarrow{\hspace*{0.3cm}}$ \hfill $\ell$ \hfill $\xrightarrow{\hspace*{0.3cm}}$}%
    \makebox[5.5cm]{$\xleftarrow{\hspace*{1.7cm}}$ \hfill $t_0+\ell$ \hfill $\xrightarrow{\hspace*{1.7cm}}$}%
    \makebox[1cm]{}\\
    \vspace{0.1cm}
    \framebox[4.5cm]{\phantom{$Z'$}}%
    \framebox[1.5cm]{$Z'$}%
    \framebox[5.5cm]{\phantom{$Z'$}}%
    \makebox[1cm]{$=Z$}\\
    ~
    \caption{\label{fig:retrorse}$X$, $Y$ and $Z=X\times Y$ in base $q$.}
\end{figure}
}
\title{Tight Cell-Probe Bounds for Online Integer Multiplication and Convolution\thanks{~A preliminary version of this paper appeared in ICALP '11.}}
\author{
    Rapha\"{e}l Clifford%
    \thanks{~University of Bristol, Department of Computer Science, Bristol, U.K.}
    \and Markus Jalsenius\footnotemark[2]%
}
\date{}
\begin{document}

\maketitle

\begin{abstract}
    We show tight bounds for both online integer multiplication and convolution in the cell-probe model with word size $w$. For the multiplication problem, one pair of digits, each from one of two $n$ digit numbers that are to be multiplied, is given as input at step $i$. The online algorithm outputs a single new digit from the product of the numbers before step $i+1$. We give a $\Theta{\left(\frac{\delta}{w} \log n\right)}$ bound on average per output digit for this problem where $2^\delta$ is the maximum value of a digit. In the convolution problem, we are given a fixed vector $V$ of length $n$ and we consider a stream in which numbers arrive one at a time. We output the inner product of $V$ and the vector that consists of the last $n$ numbers of the stream. We show a $\Theta{\left(\frac{\delta}{w}\log n\right)}$ bound for the number of probes required per new number in the stream. All the bounds presented hold under randomisation and amortisation. Multiplication and convolution are central problems in the study of algorithms which also have the widest range of practical applications.
\end{abstract}



\section{Introduction}

We consider two related and fundamental problems: multiplying two integers and computing the convolution or cross-correlation of two vectors. We study both these problems in an online or streaming context and provide matching upper and lower bounds in the cell-probe model.  The importance of these problems is hard to overstate with both the integer multiplication and convolution problems playing a central role in modern algorithms design and theory.

For notational brevity, we write $[q]$ to denote the set $\{0,\dots,q-1\}$, where $q$ is a positive integer.

\begin{problem}[Online convolution]
    For a fixed vector $V\in[q]^n$ of length $n$, we consider a stream in which numbers from $[q]$ arrive one at a time. For each arriving number, before the next number arrives, we output the inner product (modulo~$q$) of $V$ and the vector that consists of the last $n$ numbers of the stream.
\end{problem}

We show that there are instances of this problem such that any algorithm solving it will require $\Omega(\frac{\delta}{w}\log n)$ amortised time on average per output, where $\delta=\log_2 q$ and $w$ is the number of bits per cell in the cell-probe model. The result is formally stated in Theorem~\ref{thm:conv}.

\begin{problem}[Online multiplication]
    Given two numbers $X,Y\in [q^n]$, where $q$ is the base and $n$ is the number of digits per number, we want to output the $n$ least significant digits of the product of $X$ and $Y$, in base $q$. We must do this under the constraint that the $i$th digit of the product (starting from the lower-order end) is outputted before the $(i+1)$th digit, and when the $i$th digit is outputted, we only have access to the $i$ least significant digits of $X$ and $Y$,\! respectively. We can think of the digits of $X$ and $Y$ arriving online in pairs, one digit from each of $X$ and $Y$.
\end{problem}

We show that there are instances of this problem such that any algorithm solving it takes $\Omega(\frac{\delta}{w} \log n)$ time on average per input pair, where $\delta=\log_2 q$ and $w$ is the number of bits per cell in the cell-probe model. The result is formally stated in Theorem~\ref{thm:mult}

Our main technical innovation is to extend recently developed methods designed to give lower bounds on dynamic data structures to the seemingly distinct field of online algorithms.  Where $\delta = w$, for example, we have $\Omega(\log{n})$ lower bounds for both online multiplication and convolution, thereby matching the currently best known offline upper bounds in the RAM model. As we discuss in the Section~\ref{sec:previous}, this may be the highest lower bound that can be formally proved for these problems without a further significant theoretical breakthrough.

For the convolution problem, one consequence of our results is a new separation between the time complexity of exact and inexact string matching in a stream. The convolution has played a particularly important role in the field of combinatorial pattern matching where many of the fastest algorithms rely crucially for their speed on the use of fast Fourier transforms (FFTs) to perform repeated convolutions. These methods have also been extended to  allow searching for patterns in rapidly processed data streams~\cite{CEPP:2011,CS:2011}. The results we present here therefore give the first strict separation between the constant time complexity of online exact matching~\cite{Galil:1981} and any convolution based online pattern matching algorithm.

Although we show only the existence of probability distributions on the inputs for which we can prove lower bounds on the expected running time of any deterministic algorithm, by Yao's minimax principle~\cite{Yao1977:Minimax} this also immediately implies that for every (randomised) algorithm, there is a worst-case input such that the (expected) running time is equally high.  Therefore our lower bounds hold equally for randomised algorithms as for deterministic ones.

The lower bounds we show for both online multiplication and convolution are also tight within the cell-probe model. This can be seen by application of reductions described in~\cite{FS:1973, CEPP:2011}.  It was shown there that any offline algorithm for multiplication~\cite{FS:1973} or convolution~\cite{CEPP:2011} can be converted to an online one with at most an $O(\log{n})$ factor overhead. For details of these reductions we refer the reader to the original papers.  In our case, the same approach also allows us to directly convert any cell-probe algorithm from an offline to online setting. An offline cell-probe algorithm for either multiplication or convolution could first read the whole input, then compute the answers and finally output them. This takes $O{(\frac{\delta}{w} n)}$ cell probes. We can therefore derive online cell-probe algorithms which take only $O{(\frac{\delta}{w}\log n)}$ probes per output, hence matching the new lower bound we give.

\subsection{Previous results and upper bounds in the RAM model} \label{sec:previous}

The best time complexity lower bounds for online multiplication of two $n$-bit numbers were given in the 1974 by Paterson, Fischer and Meyer.  They presented an $\Omega(\log{n})$ lower bound for multitape Turing machines~\cite{PFM:1974} and also gave an $\Omega(\log{n}/\log{\log n})$ lower bound for the `bounded activity machine' (BAM).  The BAM, which is a strict generalisation of the Turing machine model but which has nonetheless largely fallen out of favour, attempts to capture the idea that future states can only depend on a limited part of the current configuration.  To the authors' knowledge, there has been no progress on cell-probe lower bounds for online multiplication or convolution previous to the work we present here.

There have however been attempts to provide offline lower bounds for the related problem of computing the FFT.  In~\cite{Morgenstern:1973} Morgenstern gave an $\Omega(n \log{n})$ lower bound conditional on the assumption that the underlying field of the transform is the complex numbers and that the modulus of any complex numbers involved in the computation is at most $1$.  Papadimitriou gave the same $\Omega(n \log{n})$ lower bound for FFTs of length a power of two, this time excluding certain classes of algorithms including those that rely on linear mathematical relations among the roots of unity~\cite{Papadimitriou:1979}.  This work had the advantage of giving a conditional lower bound for FFTs over more general algebras than was previously possible, including for example finite fields.  In 1986 Pan~\cite{Pan:1986} showed that another class of algorithms having a so-called synchronous structure must require $\Omega(n \log{n})$ time for the computation of both the FFT and convolution.

The fastest known algorithms for both offline integer multiplication and convolution in the word-RAM model require $O(n\log{n})$ time by a well known application of a constant number of FFTs.  As a consequence our online lower bounds match the best known time upper bounds for the offline problem. As we discussed above, our lower bounds are also tight within the cell-probe model for the online problems.   The question now naturally arises as to whether one can find higher lower bounds in the RAM model.   This appears as an interesting question as there remains a gap between the best known time upper bounds provided by existing algorithms and the lower bounds that we give within the cell-probe model.  However, as we mention above, any offline algorithm for convolution or multiplication can be converted to an online one with at most an $O(\log{n})$ factor overhead~\cite{FS:1973,CEPP:2011}. As a consequence, it is likely to be hard to prove a higher lower bound for the online problem than we have given, at least for the case where ${\delta}/w \in \Theta(1)$, as this would immediately imply a superlinear lower bound for offline convolution or multiplication.  Such superlinear lower bounds are not yet known for any problem in \NPtime except in very restricted models of computation, such as for example a single tape Turing Machine.  Our only alternative route to find tight time bounds would be to find better upper bounds for the online problems. For the case of online multiplication at least, this has been an open problem since at least 1973 and has so far resisted our best attempts.

\subsection{The cell-probe model}

When stating lower bounds it is important to be precise about the model in which the bounds apply. Our bounds in this paper hold in perhaps the strongest model of them all, the \emph{cell-probe model}, introduced originally by Minsky and Papert~\cite{MP:1969,MP:1988} in a different context and then subsequently by Fredman~\cite{Fredman:1978} and Yao~\cite{Yao1981:Tables}. In this model, there is a separation between the computing unit and the memory, which is external and consists of a set of cells of $w$ bits each. The computing unit cannot remember any information between operations. Computation is free and the cost is measured only in the number of cell reads or writes (cell-probes). This general view makes the model very strong, subsuming for instance the popular word-RAM model. In the word-RAM model certain operations on words, such as addition, subtraction and possibly multiplication take constant time (see for example~\cite{Hagerup:1998} for a detailed introduction). Here a word corresponds to a cell. Typically we think of the cell size $w$ as being at least $\log_2 n$ bits, where $n$ is the number of cells. This allows each cell to hold the address of any location in memory.

The generality of the cell-probe model makes it particularly attractive for establishing lower bounds for data structure problems and many such results have been given in the past couple of decades. The approaches taken have until recently mainly been based on communication complexity arguments and the chronogram technique of Fredman and Saks~\cite{FS1989:chronogram}. There remains however, a number of unsatisfying gaps between the lower bounds and known upper bounds. Only a few years ago, a breakthrough lead by Demaine and \Patrascu gave us the tools to seal the gaps for several data structure problems~\cite{DP2006:Low-Bounds}. The new technique was based on information theoretic arguments. Demaine and \Patrascu also presented ideas which allowed them to express more refined lower bounds such as trade-offs between updates and queries of dynamic data structures.  For a list of data structure problems and their lower bounds using these and related techniques, see for example~\cite{Pat2008:Thesis}.

\subsection{Organisation}

We present the new cell-probe lower bound for online convolution in Section~\ref{sec:convolution} along with the main techniques that we will use throughout.  In Section~\ref{sec:multiplication} we show how these can then be applied to the problem of online multiplication.

\section{Online convolution} \label{sec:convolution}

For a vector $V$ of length $n$ and $i\in [n]$, we write $V[i]$ to denote the elements of $V$. For positive integers $n$ and $q$, the \emph{inner product} of two vectors $\VA,\VB\in [q]^n$, denoted $\langle\VA,\VB\rangle$, is defined as
\begin{equation*}
    \langle \VA,\VB\rangle = \sum_{i\in [n]} (\VA[i]\cdot \VB[i])\,.
\end{equation*}

Parameterised by two positive integers $n$ and $q$, and a fixed vector $\VB\in [q]^n$, the \emph{online convolution problem} asks to maintain a vector $\VA\in [q]^n$ subject to an operation $\update(\DE)$, which takes a parameter $\DE\in [q]$, modifies $\VA$ to be the vector $(\VA[1],\VA[2],\dots,\VA[n-1],\DE)$ and then returns the inner product $\langle \VA,\VB\rangle$. In other words,  $\update(\DE)$ modifies $\VA$ by shifting all elements one step to the left, pushing the leftmost element out, and setting the new rightmost element to $\DE$. We consider the online convolution problem over the ring $\bbZ/q\bbZ$, that is integer arithmetic modulo~$q$. Let $\delta = \log_2 q$.

\begin{theorem}
    \label{thm:conv}
      For any positive integers $q$ and $n$, in the cell probe model with $w$ bits per cell there exist instances of the online convolution problem such that the expected amortised time per $\update$-operation is $\Omega\left(\frac{\delta}{w}\log n\right)$, where $\delta=\log_2 q$.
\end{theorem}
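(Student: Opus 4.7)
The plan is to adapt the information-transfer lower bound technique of Pătrașcu and Demaine~\cite{DP2006:Low-Bounds} to the online convolution problem. I consider a hard distribution over inputs in which the fixed vector $V\in[q]^n$ is drawn uniformly at random and each update $\DE_t\in[q]$ is independent and uniform; by Yao's principle it suffices to lower bound the expected running time of any deterministic algorithm under this distribution. The $n$ update operations are arranged at the leaves of a balanced binary tree $\calT$ as in Figure~\ref{fig:tree}. For each internal node $v$ of $\calT$ whose subtree spans a time interval of length $2\ell$ with left half $[t_0,t_0+\ell-1]$ and right half $[t_0+\ell,t_1]$, let $\IT{v}$ denote the set of memory cells whose last write during the left half is followed by a read during the right half without any intervening write. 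Any individual cell-probe is counted by $\IT{v}$ for at most one node $v$, so the total number of cell-probes is at least $\sum_v|\IT{v}|$.

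The crux is the estimate $\expectinline{|\IT{v}|}=\Omega(\delta\ell/w)$ for every internal node $v$, which I prove by an encoding argument. Given (a) the addresses and contents of the cells in $\IT{v}$, together with the common side-information consisting of (b) the memory state just before time $t_0$, (c) the vector $V$, and (d) all updates outside $[t_0,t_0+\ell-1]$, I show that the $\ell$ left-half updates $\DE_{t_0},\ldots,\DE_{t_0+\ell-1}$ can be reconstructed. The decoder reconstructs the memory state at time $t_0+\ell$ by taking contents from $\IT{v}$ in (a) and from the time-$t_0$ state in (b) for all other cells; this is correct because any cell read during the right half that is not in $\IT{v}$ either is untouched throughout $[t_0,t_0+\ell-1]$ or is overwritten in the right half before being read. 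The decoder then simulates the algorithm through the right half to recover the $\ell$ outputs produced there. Since the output at time $s$ equals $\sum_{i}V[i]\DE_{s-n+i}\pmod q$, subtracting the contributions of the known updates in (d) leaves an $\ell\times\ell$ linear system in the $\ell$ unknown left-half updates, whose coefficient matrix is Toeplitz in consecutive entries of $V$. For a uniformly random $V$ this matrix has rank $\Omega(\ell)$ in expectation over $\bbZ/q\bbZ$, so the outputs carry $\Omega(\delta\ell)$ bits of Shannon information about the left-half updates. By source coding, piece (a) must have expected length at least $\Omega(\delta\ell)$; since each cell contributes $O(w)$ bits this yields $\expectinline{|\IT{v}|}=\Omega(\delta\ell/w)$.

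Summing over $\calT$ completes the proof: at depth $k$ there are $2^{k-1}$ internal nodes with $\ell=n/2^k$, contributing $\Omega(\delta n/w)$ in total per depth, so over the $\Theta(\log n)$ depths we obtain $\sum_v\expectinline{|\IT{v}|}=\Omega(\delta n\log n/w)$ probes in expectation, i.e.\ an amortised $\Omega(\delta\log n/w)$ per $\update$-operation; the existence of a specific hard $V$ follows by averaging. I expect the main obstacle to lie in the rank argument for the random Toeplitz matrix over the general ring $\bbZ/q\bbZ$, which for composite $q$ requires handling each prime power factor separately and combining the resulting rank estimates via the Chinese remainder theorem. A secondary point is the encoding of cell addresses, handled in the standard way by charging $O(w+\log n)$ bits per cell and noting that the claim is vacuous unless $w=O(\delta\log n)$, and in particular under the mild regularity assumption $w=\Omega(\log n)$ this overhead is absorbed into the constant.
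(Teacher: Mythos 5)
Your overall framework is exactly the paper's: the same information-transfer sets over a balanced binary tree on the time axis, the same encode/decode simulation bounding the conditional entropy of an interval's outputs by roughly $2w\cdot\expectinline{|IT(v)|}$, the same Toeplitz linear system relating the right-half outputs to the unknown left-half updates, and the same summation over nodes followed by averaging to extract a fixed hard $\VB$. The genuine gap is precisely the step you yourself flag as ``the main obstacle'': the claim that for uniformly random $\VB$ the $\ell\times\ell$ Toeplitz matrix $\MVL$ yields $\Omega(\delta\ell)$ bits of entropy (equivalently rank, or image size, $\Omega(\ell)$) over $\bbZ/q\bbZ$ for \emph{every} positive integer $q$. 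This is the quantitative heart of the lower bound and is asserted rather than proved. Even for prime $q$ it is not routine: the paper has to invoke the known count that exactly a $(1-1/q)$ fraction of $\ell\times\ell$ Toeplitz matrices over a $q$-element field are non-singular (Daykin; Kaltofen--Lobo) to get $\expectinline{R_{\VB,\ell}}\geq\ell/2$ in its Lemma~\ref{lem:invertible}. For composite $q$ ``rank'' is not even well-defined, and what the entropy argument really needs is the size of the image of $x\mapsto\MVL x$; your CRT sketch (reduce modulo each prime-power factor, check the reductions of $\VB$ are uniform, note invertibility mod $p^k$ follows from invertibility mod $p$, and add component entropies) is plausible but would have to be carried out in full, since Theorem~\ref{thm:conv} is claimed for arbitrary $q$.

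You should know that the paper proves Theorem~\ref{thm:conv} by sidestepping this issue entirely: instead of a random $\VB$ it uses the explicit vector $K_n$ whose ones sit exactly at positions $i$ with $n-1-i$ a power of two. For $\ell$ a power of two, rows $\ell/2,\dots,\ell-1$ of $\MVL$ are then standard unit vectors, so $\ell/2$ of the unknown updates are read off directly from the outputs (Lemma~\ref{lem:fixed}); this bound is completely elementary and valid over $\bbZ/q\bbZ$ for every $q$, with no Toeplitz non-singularity count and no CRT. Your random-$\VB$ route corresponds to the paper's Theorem~\ref{thm:random-lowerbound}, which is stated only for prime $q$ (with a remark that a nearby prime handles $q=2^\delta$ up to a constant). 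So either complete the random-Toeplitz/CRT argument you sketch, citing or reproving the non-singularity count over each prime factor, or switch to an explicit vector in the style of $K_n$, which gives the full statement at much lower cost.
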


In order to prove Theorem~\ref{thm:conv} we will consider a random instance that is described by $n$ $\update$-operations on the sequence $\DE=(\DE_0,\dots,\DE_{n-1})$, where each $\DE_i$ is chosen independently and uniformly at random from $[q]$. We defer the choice of the fixed vector $\VB$ until later. For $t$ from 0 to $n-1$, we use $t$ to denote the time, and we say that the operation $\update(\DE_t)$ occurs at time $t$.

We may assume that prior to the first update, the vector $\VA=\{0\}^n$, although any values are possible since they do not influence the analysis. To avoid technicalities we will from now on assume that $n$ is a power of two.

\subsection{Information transfer}

Following the overall approach of Demaine and \Patrascu~\cite{DP2004:Partial-sums} we will consider adjacent time intervals and study the \emph{information} that is transferred from the operations in one interval to the next interval. More precisely, let $t_0,t_1,t_2\in [n]$ such that $t_0\leq t_1< t_2$ and consider any algorithm solving the online convolution problem. We would like to keep track of the memory cells that are written to during the time interval $[t_0,t_1]$ and then read during the succeeding interval $[t_1+1,t_2]$. The information from the \update-operations taking place in the interval $[t_0,t_1]$ that the algorithm passes on to the interval $[t_1+1,t_2]$ must be contained in these cells. Informally one can say that there is no other way for the algorithm to determine what occurred during the interval $[t_0,t_1]$ except through these cells. Formally, the \emph{information transfer}, denoted $IT(t_0,t_1,t_2)$, is defined to be the set of memory cells $c$ such that $c$ is written during $[t_0,t_1]$, read at a time $t_r\in [t_1+1,t_2]$ and not written during $[t_1+1,t_r]$. Hence a cell that is overwritten in $[t_1+1,t_2]$ before being read is not included in the information transfer. Observe that the information transfer depends on the algorithm, the vector $\VB$ and the sequence $\DE$. The first aim is to show that for any choice of algorithm solving the online convolution problem, the number of cells in the information transfer is bounded from below by a sufficiently large number for some choice of the vector~$\VB$.

For $0\leq t_0\leq t_1< n$, we write $\Dint{t_0,t_1}$ to denote the subsequence $(\DE_{t_0},\dots,\DE_{t_1})$ of $\DE$, and $\DNint{t_0,t_1}$ to denote the sequence $(\DE_{0},\dots,\DE_{t_0-1},\DE_{t_1+1},\dots,\DE_{n-1})$ which contains all the elements of $\DE$ except for those in $\Dint{t_0,t_1}$. For $t\in [n]$, we let $\ANS_t\in [q]$ denote the inner product returned by $\update(\DE_t)$ at time $t$ (recall that we operate modulo $q$). We let $\ANSint{t_1+1,t_2} = (\ANS_{t_1+1},\dots,\ANS_{t_2})$.

Since $\DE$ is a random variable, so is $\ANSint{t_1+1,t_2}$. In particular, if we condition on a fixed choice of $\DNint{t_0,t_1}$, call it $\Dfix$, then $\ANSint{t_1+1,t_2}$ is a random variable that depends on the random values in $\Dint{t_0,t_1}$. The dependency on the $\update$-operations in the interval $[t_0,t_1]$ is captured by the information transfer $IT(t_0,t_1,t_2)$, which must encode all the relevant information in order for the algorithm to correctly output the inner products in $[t_1+1,t_2]$. In other words, an encoding of the information supplied by cells in the information transfer is an upper bound on the conditional \emph{entropy} of $\ANSint{t_1+1,t_2}$. This fact is stated in Lemma~\ref{lem:entropy-upper} and was given in~\cite{Pat2008:Thesis} with small notational differences.


\begin{lemma}[Lemma~3.2 of \cite{Pat2008:Thesis}]
    \label{lem:entropy-upper}
    The entropy
    \begin{align*}
        H(\ANSint{t_1+1,t_2} \,&\mid\, \DNint{t_0,t_1} = \Dfix) \leq\\
            & w + 2w\cdot \expect{|\IT{t_0,t_1,t_2}| \,\mid\, \DNint{t_0,t_1} = \Dfix}\,.
    \end{align*}
\end{lemma}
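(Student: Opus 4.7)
The plan is to proceed via a standard encoding argument: I will exhibit, for each realisation of $\Dint{t_0,t_1}$, a bit string $E$ (the code) of length $w + 2w\cdot|\IT{t_0,t_1,t_2}|$, drawn from a prefix-free set, such that $E$ together with $\Dfix$ determines $\ANSint{t_1+1,t_2}$. The lemma then follows from the chain $H(\ANSint{t_1+1,t_2}\mid\DNint{t_0,t_1}=\Dfix) \leq H(E\mid\DNint{t_0,t_1}=\Dfix) \leq \expectinline{\text{length}(E)\mid\DNint{t_0,t_1}=\Dfix}$, where the first inequality uses that $\ANSint{t_1+1,t_2}$ is a function of $E$ given $\Dfix$ and the second is Kraft's inequality applied to a prefix-free code. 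By Yao's principle I may assume throughout that the algorithm is deterministic, so once $\Dfix$ is fixed the entire execution is a function of $\Dint{t_0,t_1}$.

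The code $E$ is simple. First, $|\IT{t_0,t_1,t_2}|$ is written in exactly $w$ bits, which is possible because cell addresses fit in one cell and so $|\IT{t_0,t_1,t_2}|\leq 2^w$. Then, for each $c\in\IT{t_0,t_1,t_2}$, I append $2w$ bits: the address of $c$ ($w$ bits) followed by the content of $c$ at the end of time $t_1$ ($w$ bits). The total length is $w + 2w\cdot|\IT{t_0,t_1,t_2}|$, and the code is prefix-free because the total length is a function of the first $w$ bits.

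To recover $\ANSint{t_1+1,t_2}$ I describe a decoder. Given $\Dfix$ and $E$, the decoder first simulates the algorithm on $\update(\DE_0),\dots,\update(\DE_{t_0-1})$ using values drawn from $\Dfix$, and records the resulting memory contents in a table $M_0$. It parses $E$ into a partial table $I$ mapping each cell of $\IT{t_0,t_1,t_2}$ to its stored content. It then resumes the simulation at time $t_1+1$ on the remaining inputs from $\Dfix$, maintaining its own write-log $L$ of writes performed in $[t_1+1,t_2]$. Each read of a cell $c$ is resolved by consulting, in order, $L$, $I$, and $M_0$, returning the first value found. A short induction on the number of reads performed during $[t_1+1,t_2]$ shows the returned value always agrees with the true execution: the most recent write to $c$ preceding the read either lies in $[t_1+1,t)$ (caught by $L$), or lies in $[t_0,t_1]$ with no subsequent overwrite (so $c\in\IT{t_0,t_1,t_2}$ by definition and $I[c]$ is correct), or lies strictly before $t_0$ (caught by $M_0$). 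Hence the decoder reproduces $\ANSint{t_1+1,t_2}$ exactly.

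The main point requiring care is the lookup priority $L$ before $I$ before $M_0$: a cell in $\IT{t_0,t_1,t_2}$ may well be overwritten during $[t_1+1,t_2]$ after its first read there, and on subsequent reads the newly written value, recorded in $L$, must take precedence over the (now stale) value in $I$. Everything else is routine: taking conditional expectation over $\Dint{t_0,t_1}$ with $\DNint{t_0,t_1}=\Dfix$ held fixed converts the code length $w + 2w\cdot|\IT{t_0,t_1,t_2}|$ into the right-hand side of the claimed inequality.
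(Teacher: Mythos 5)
Your proof is correct and follows essentially the same route as the paper: encode the information transfer by its size plus address--content pairs ($w + 2w\cdot|\IT{t_0,t_1,t_2}|$ bits), then decode by simulating the algorithm up to $t_0-1$, skipping $[t_0,t_1]$, and resuming at $t_1+1$ with reads resolved from the stored cells. Your lookup priority ($L$ before $I$ before $M_0$) is just a different bookkeeping of the paper's step of blanking out a stored cell once it is overwritten in $[t_1+1,t_2]$, and your explicit prefix-free/Kraft justification only makes the ``expected code length bounds entropy'' step more careful than the paper's.
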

\begin{proof}
The average length of any encoding of $\ANSint{t_1+1,t_2}$ (conditioned on $\Dfix$) is an upper bound on its entropy. We use the information transfer as an encoding in the following way. For every cell $c$ in the information transfer $\IT{t_0,t_1,t_2}$, we store the address of $c$, which takes at most $w$ bits under the assumption that the cell size can hold the address of every cell, and we store the contents of $c$, which is a cell of $w$ bits. In total this requires $2w\cdot |\IT{t_0,t_1,t_2}|$ bits. In addition, we store the size of the information transfer, $|\IT{t_0,t_1,t_2}|$, so that any algorithm decoding the stored information knows how many cells are stored and hence when to stop checking for stored cells. Storing the size of the information transfer requires $w$ bits, thus the average total length of the encoding is $w + 2w\cdot \expectinline{|\IT{t_0,t_1,t_2}| \,\mid\, \DNint{t_0,t_1} = \Dfix}$.

In order to prove that the described encoding is valid, we describe how to decode the stored information. We do this by simulating the algorithm. First we simulate the algorithm from time 0 to $t_0-1$. We have no problem doing so since all necessary information is available in $\Dfix$, which we know. We then skip from time $t_0$ to $t_1$ and resume simulating the algorithm from time $t_1+1$ to $t_2$. In this interval, the algorithm outputs the values in $\ANSint{t_1+1,t_2}$. In order to correctly do so, the algorithm might need information from the $\update$-operations in $[t_0,t_1]$. This information is only available through the encoding described above. When simulating the algorithm, for each cell $c$ we read, we check if the address of $c$ is contained in the list of addresses that was stored. If so, we obtain the contents of $c$ by reading its stored value. Each time we write to a cell whose address is in the list of stored addresses, we remove it from the stored list, or blank it out. Note that every cell we read whose address is not in the stored list contains a value that was written last either before time $t_0$ or after time $t_1$. Hence its value is known to us.
\end{proof}

\subsection{Recovering information} \label{sec:conv-recover}

In the previous section, we provided an upper bound for the entropy of the outputs from the $\update$-operations in $[t_1+1,t_2]$. Next we will explore how much information \emph{needs to be} communicated from $[t_0,t_1]$ to $[t_1+1,t_2]$. This will provide a lower bound on the entropy. As we will see, the lower bound can be expressed as a function of the length of the intervals and the vector $\VB$.

Suppose that $[t_0,t_1]$ and $[t_1+1,t_2]$ both have the same length $\ell$. That is, $t_1-t_0+1=t_2-t_1=\ell$. For $i\in [\ell]$, the output at time $t_1+1+i$ can be broken into two sums $S_i$ and $S'_i$, such that $P_{t_1+1+i} = S_i + S'_i$, where
\begin{equation*}
    S_i = \sum_{j\in[\ell]} (\VB[n-1-(\ell+i)+j]\cdot \Delta_{t_0+j})
\end{equation*}
is the contribution from the alignment of $\VB$ with $\Dint{t_0,t_1}$, and $S'_i$ is the contribution from the alignments that do not include $\Dint{t_0,t_1}$.

We define $\MVL$ to be the $\ell$$\times$$\ell$ matrix with entries $\MVL(i,j)= \VB[n-1-(\ell+i)+j]$. That is,
\begin{equation*}
    \MVL =
        \begin{pmatrix}
            \VB[n-\ell-1] & \VB[n-\ell+0] & \VB[n-\ell+1] & \cdots & \VB[n-2] \\
            \VB[n-\ell-2] & \VB[n-\ell-1] & \VB[n-\ell+0] & \cdots & \VB[n-3] \\
            \VB[n-\ell-3] & \VB[n-\ell-2] & \VB[n-\ell-1] & \cdots & \VB[n-4] \\
            \vdots  & \vdots & \vdots & \ddots & \vdots  \\
            \VB[n-2\ell] & \VB[n-2\ell+1] & \VB[n-2\ell+2] & \cdots & \VB[n-\ell-1]
        \end{pmatrix}\,.
\end{equation*}
We observe that $\MVL$ is a \emph{Toeplitz} matrix (or ``upside down'' \emph{Hankel} matrix) since it is constant on each descending diagonal from left to right. This property will be important later. From the definitions above it follows that
\begin{equation}
    \label{eq:matrix}
    \MVL\times
        \begin{pmatrix}
            \DE_{t_0} \\
            \DE_{t_0+1} \\
            \vdots \\
            \DE_{t_1}
        \end{pmatrix}
        =
        \begin{pmatrix}
            S_0 \\
            S_1 \\
            \vdots \\
            S_{\ell-1}
        \end{pmatrix}\,.
\end{equation}

We define the \emph{recovery number} $R_{\VB, \ell}$ to be the number of variables $x\in \{x_1,\dots,x_\ell\}$ such that $x$ can be determined uniquely by the system of linear equations
\begin{equation*}
    \MVL\times
        \begin{pmatrix}
            x_1 \\
            \vdots \\
            x_\ell
        \end{pmatrix}
        =
        \begin{pmatrix}
            y_1 \\
            \vdots \\
            y_\ell
        \end{pmatrix}\,,
\end{equation*}
where we operate in $\bbZ/q\bbZ$. The recovery number may be distinct from the rank of a matrix, even where we operate over a field. As an example, consider the all ones matrix. The matrix will have recovery number zero but rank one.  The recovery number is however related to the conditional entropy of $\ANSint{t_1+1,t_2}$ as described by the next lemma.

\begin{lemma}
    \label{lem:entropy-lower}
    If the intervals $[t_0,t_1]$ and $[t_1+1,t_2]$ both have the same length~$\ell$, then the entropy
    \begin{equation*}
        H(\ANSint{t_1+1,t_2} \,\mid\, \DNint{t_0,t_1} = \Dfix)
            \geq \delta R_{\VB, \ell}\,.
    \end{equation*}
\end{lemma}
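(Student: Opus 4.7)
The plan is to observe that once we condition on $\DNint{t_0,t_1}=\Dfix$, the only remaining randomness in $\ANSint{t_1+1,t_2}$ comes from $\Dint{t_0,t_1}$, and that this randomness is funnelled through the matrix-vector product in~\eqref{eq:matrix}. Then I will apply a data-processing argument: the $R_{\VB,\ell}$ coordinates that are uniquely recoverable from $(S_0,\dots,S_{\ell-1})$ are deterministic functions of $\ANSint{t_1+1,t_2}$ (given $\Dfix$), and since they are a subset of the i.i.d.\ uniform entries of $\Dint{t_0,t_1}$, their joint entropy is exactly $\delta R_{\VB,\ell}$.

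In more detail, I would proceed as follows. First, recall the decomposition $P_{t_1+1+i} = S_i + S'_i$ where $S'_i$ depends only on the entries of $\DE$ outside $[t_0,t_1]$. Under the conditioning event $\DNint{t_0,t_1}=\Dfix$, each $S'_i$ becomes a fixed constant, so there is a bijection (depending on $\Dfix$) between $\ANSint{t_1+1,t_2}$ and $(S_0,\dots,S_{\ell-1})$. Consequently
\[
    H\bigl(\ANSint{t_1+1,t_2}\,\bigm|\,\DNint{t_0,t_1}=\Dfix\bigr)
    \;=\;
    H\bigl((S_0,\dots,S_{\ell-1})\,\bigm|\,\DNint{t_0,t_1}=\Dfix\bigr),
\]
and because $(S_0,\dots,S_{\ell-1}) = \MVL \cdot (\DE_{t_0},\dots,\DE_{t_1})^\top$ depends only on $\Dint{t_0,t_1}$, which is independent of $\DNint{t_0,t_1}$, the conditioning on $\Dfix$ on the right-hand side is vacuous.

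Next, by definition of the recovery number, there is a set $I\subseteq\{1,\dots,\ell\}$ with $|I|=R_{\VB,\ell}$ such that for every $i\in I$ the coordinate $x_i$ is uniquely determined by the system $\MVL \mathbf{x}=\mathbf{y}$ over $\bbZ/q\bbZ$. Hence the random vector $(\DE_{t_0+i-1})_{i\in I}$ is a deterministic function of $(S_0,\dots,S_{\ell-1})$, and the data-processing inequality gives
\[
    H\bigl((S_0,\dots,S_{\ell-1})\bigr)\;\geq\;H\bigl((\DE_{t_0+i-1})_{i\in I}\bigr).
\]
Since the entries of $\Dint{t_0,t_1}$ are i.i.d.\ uniform on $[q]$, so are the $R_{\VB,\ell}$ coordinates indexed by $I$, and therefore
\[
    H\bigl((\DE_{t_0+i-1})_{i\in I}\bigr) \;=\; R_{\VB,\ell}\cdot \log_2 q \;=\;\delta\, R_{\VB,\ell}.
\]
Chaining the (in)equalities yields the stated bound. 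No step is really the obstacle here; the only subtlety worth being explicit about is that the recoverable coordinates, despite possibly being linear combinations of the $S_i$'s from a nontrivial row reduction, are genuine \emph{functions} of the outputs (that is what ``determined uniquely'' means), so the data-processing step is valid without needing to invert the matrix.
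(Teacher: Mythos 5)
Your proposal is correct and follows essentially the same route as the paper's own proof: condition on $\Dfix$ so that the $S'_i$ become constants, recover the $R_{\VB,\ell}$ uniquely determined coordinates of $\Dint{t_0,t_1}$ from the outputs via Equation~(\ref{eq:matrix}), and count $\delta$ bits of entropy for each since the entries are i.i.d.\ uniform on $[q]$. Your explicit appeal to the data-processing inequality and to the independence of $\Dint{t_0,t_1}$ from $\DNint{t_0,t_1}$ merely spells out steps the paper leaves implicit.
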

\begin{proof}
    As described above, for $i\in [\ell]$, $P_{t_1+1+i} = S_i + S'_i$, where $S'_i$ is a constant that only depends on $\VB$ and $\Dfix$. Hence we can compute the values $S_0,\dots,S_{\ell-1}$ from $\ANSint{t_1+1,t_2}$. From Equation~(\ref{eq:matrix}) it follows that $S_0,\dots,S_{\ell-1}$ uniquely specify $R_{\VB, \ell}$ of the parameters in $\Dint{t_0,t_1}$. That is, we can recover $R_{\VB, \ell}$ of the parameters from the interval $[t_0,t_1]$. Each of these parameters is a random variable that is uniformly distributed in $[q]$, so it contributes $\delta$ bits of entropy.
\end{proof}

We now combine Lemmas~\ref{lem:entropy-upper} and~\ref{lem:entropy-lower} in the following corollary.

\begin{corollary}
    \label{cor:conv-IT}
    For any fixed vector $\VB$, two intervals $[t_0,t_1]$ and $[t_1+1,t_2]$ of the same length $\ell$, and any algorithm solving the online convolution problem on $\DE$ chosen uniformly at random from $[q]^n$,
    \begin{equation*}
        \expect{|\IT{t_0,t_1,t_2}|}\geq \frac{\delta R_{\VB, \ell}}{2w}-\frac{1}{2}\,.
    \end{equation*}
\end{corollary}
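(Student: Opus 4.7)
The plan is to chain the two preceding lemmas and then average out the conditioning. Fix an arbitrary realisation $\Dfix$ of $\DNint{t_0,t_1}$. Lemma~\ref{lem:entropy-upper} gives the upper bound
\begin{equation*}
    H(\ANSint{t_1+1,t_2} \mid \DNint{t_0,t_1} = \Dfix) \leq w + 2w\cdot \expect{|\IT{t_0,t_1,t_2}| \mid \DNint{t_0,t_1} = \Dfix},
\end{equation*}
while Lemma~\ref{lem:entropy-lower}, whose hypothesis that both intervals have length $\ell$ is exactly what the corollary assumes, gives the lower bound $\delta R_{\VB,\ell}$ on the same conditional entropy. Combining these two inequalities and solving for the expected information transfer yields
\begin{equation*}
    \expect{|\IT{t_0,t_1,t_2}| \mid \DNint{t_0,t_1} = \Dfix} \;\geq\; \frac{\delta R_{\VB,\ell}}{2w} - \frac{1}{2}.
\end{equation*}

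Since this bound holds pointwise for every choice of $\Dfix$ and the right-hand side does not depend on $\Dfix$, taking expectation over $\DNint{t_0,t_1}$ (which is drawn uniformly from $[q]^{n-\ell}$) and applying the tower property of expectation gives
\begin{equation*}
    \expect{|\IT{t_0,t_1,t_2}|} \;=\; \expect{\expect{|\IT{t_0,t_1,t_2}| \mid \DNint{t_0,t_1}}} \;\geq\; \frac{\delta R_{\VB,\ell}}{2w} - \frac{1}{2},
\end{equation*}
which is exactly the claimed inequality.

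There is no real obstacle here: the corollary is a direct syntactic composition of the upper bound on entropy in terms of cells in the information transfer with the lower bound on entropy in terms of the recovery number, followed by averaging out the conditioning. The only point requiring a moment of care is to verify that Lemma~\ref{lem:entropy-lower}'s equal-length hypothesis is satisfied (it is, by assumption) and that the constant $-\tfrac{1}{2}$ arises precisely from dividing the additive $w$ term from Lemma~\ref{lem:entropy-upper} by $2w$.
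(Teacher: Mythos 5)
Your argument is correct and matches the paper's own proof essentially verbatim: condition on a fixed $\Dfix$, combine the entropy upper bound of Lemma~\ref{lem:entropy-upper} with the lower bound of Lemma~\ref{lem:entropy-lower}, solve for the conditional expectation of $|\IT{t_0,t_1,t_2}|$, and remove the conditioning by taking expectation over $\DNint{t_0,t_1}$. No gaps.
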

\begin{proof}
    For $\DNint{t_0,t_1}$ fixed to $\Dfix$, comparing Lemmas~\ref{lem:entropy-upper} and~\ref{lem:entropy-lower}, we see that
    \begin{equation*}
        \expect{|\IT{t_0,t_1,t_2}| \,\mid\, \DNint{t_0,t_1} = \Dfix} \geq \frac{\delta R_{\VB, \ell}}{2w}-\frac{1}{2}\,.
    \end{equation*}
    The result follows by taking expectation over $\DNint{t_0,t_1}$ under the random sequence~$\Delta$.
\end{proof}

\subsection{The lower bound for online convolution} \label{sec:conv-lower}

We now show how a lower bound on the total number of cell reads over $n$ $\update$-operations can be obtained by summing the information transfer between many pairs of time intervals. We again follow the approach of Demaine and \Patrascu~\cite{DP2004:Partial-sums}, which involves conceptually constructing a balanced tree over the time axis. This \emph{lower bound tree}, denoted $\calT$, is a balanced binary tree on $n$ leaves. Recall that we have assumed that $n$ is a power of two. The leaves, from left to right, represent the time $t$ from $0$ to $n-1$, respectively. An internal node $v$ is associated with the times $t_0$, $t_1$ and $t_2$ such that the two intervals $[t_0,t_1]$ and $[t_1+1,t_2]$ span the left subtree and the right subtree of $v$, respectively. For example, in Figure~\ref{fig:tree},\inserttreefigure the node labelled $v$ is associated with the intervals $[16,23]$ and $[24,31]$.


For an internal node $v$ of $\calT$, we write $IT(v)$ to denote $IT(t_0,t_1,t_2)$, where $t_0$, $t_1$, $t_2$ are associated with $v$. We write $L(v)$ to denote the number of leaves in the left (same as the right) subtree of $v$. The key lemma, stated next, is a modified version of Theorem~3.6 in~\cite{Pat2008:Thesis}. The statement of the lemma is adapted to our online convolution problem and the proof relies on Corollary~\ref{cor:conv-IT}.

\begin{lemma}
    \label{lem:conv-time}
    For any fixed vector $\VB$ and any algorithm solving the online convolution problem, the expected running time of the algorithm over a sequence $\DE$ that is chosen uniformly at random from $[q]^n$ is at least
    \begin{equation*}
        \frac{\delta}{2w}\sum_{v\in \calT} R_{\VB,L(v)} - \frac{n-1}{2}\,,
    \end{equation*}
    where the sum is over the internal nodes of $\calT$.
\end{lemma}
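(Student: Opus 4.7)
The goal is to lower-bound the total number of cell probes by a sum of $|IT(v)|$ over internal nodes $v$ of $\calT$, and then apply Corollary~\ref{cor:conv-IT} to each node. The running time is at least the number of cell reads, so it suffices to exhibit an injection from pairs $(v,c)$ with $c\in IT(v)$ into the set of cell reads performed by the algorithm.

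\textbf{The charging scheme.} For a cell $c$ read at time $t_r$, let $t_w<t_r$ be the most recent time at which $c$ was written (if no such time exists, the read is not charged). Consider the lowest common ancestor $v$ in $\calT$ of the leaves labelled $t_w$ and $t_r$; the interval of $\calT$'s left subtree at $v$ contains $t_w$, the right subtree contains $t_r$, and since $t_w$ is the most recent write, $c$ is not written at any time in $[t_1+1,t_r]$. Hence $c\in IT(v)$. Conversely, given any $v$ and $c\in IT(v)$, there is by definition a read of $c$ at some $t_r$ in the right subtree whose most recent prior write falls in the left subtree, and the LCA of these two leaves is exactly $v$; this read can be assigned back to $(v,c)$ unambiguously. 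Thus the map $(v,c)\mapsto(\text{the corresponding read})$ is an injection, giving
\begin{equation*}
    \#\{\text{cell reads}\}\ \geq\ \sum_{v\in \calT} |IT(v)|.
\end{equation*}

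\textbf{Finishing.} Taking expectation over the random sequence $\DE$ and applying Corollary~\ref{cor:conv-IT} to each internal node $v$ with its associated length $\ell=L(v)$ yields
\begin{equation*}
    \expect{\#\{\text{cell reads}\}}\ \geq\ \sum_{v\in \calT}\expect{|IT(v)|}\ \geq\ \sum_{v\in \calT}\left(\frac{\delta R_{\VB,L(v)}}{2w}-\frac{1}{2}\right).
\end{equation*}
Since a balanced binary tree on $n$ leaves has $n-1$ internal nodes, the $-\tfrac{1}{2}$ terms contribute $-\tfrac{n-1}{2}$, yielding the stated bound. The running time is at least the number of cell probes, which upper-bounds the number of cell reads, so the same lower bound transfers to the expected running time.

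\textbf{Main obstacle.} The one delicate point is verifying that the LCA charging is indeed an injection — in particular, that the condition ``$c$ is written during $[t_0,t_1]$ and not written during $[t_1+1,t_r]$'' in the definition of $IT$ is precisely captured by picking the most recent write before $t_r$ and then taking the LCA of that write with the read. Once this combinatorial bookkeeping is in place, the rest of the proof is simply summing Corollary~\ref{cor:conv-IT} over all $n-1$ internal nodes.
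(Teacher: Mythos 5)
Your proof is correct and follows essentially the same route as the paper: charge each pair $(v,c)$ with $c\in IT(v)$ to the read of $c$ whose most recent prior write lies in the left subtree of $v$ (so the LCA of the write and read times is exactly $v$), conclude that $\sum_{v\in\calT}|IT(v)|$ never over-counts reads, and then sum Corollary~\ref{cor:conv-IT} over the $n-1$ internal nodes by linearity of expectation. The only difference is presentational — you phrase the counting as an explicit injection from pairs to reads, while the paper phrases it as each read being counted at most once — which is the same argument.
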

\begin{proof}
    We first consider a fixed sequence $\DE$. We argue that the number of read instructions executed by the algorithm is at least $\sum_{v\in \calT}|IT(v)|$. To see this, for any read instruction, let $t_r$ be the time it is executed. Let $t_w\leq t_r$ be the time the cell was last written, ignoring $t_r=t_w$. Then this read instruction (the cell it acts upon), is contained in $IT(v)$, where $v$ is the lowest common ancestor of $t_w$ and $t_r$. Thus, $\sum_{v\in \calT}|IT(v)|$ never double-counts a read instruction.

    For a random $\DE$, an expected lower bound on the number of read instructions is therefore $\expectinline{\sum_{v\in \calT}|IT(v)|}$. Using linearity of expectation and Corollary~\ref{cor:conv-IT}, we obtain the lower bound in the statement of the lemma.
\end{proof}

\subsubsection{Lower bound with a random vector $\VB$} \label{sec:fixed}

We have seen in Lemma~\ref{lem:conv-time} that a lower bound is highly dependent on the recovery numbers of the vector $\VB$. In the next lemma, we show that a random vector $\VB$ has recovery numbers that are large.

\begin{lemma}
    \label{lem:invertible}
    Suppose that $q$ is a prime and the vector $\VB$ is chosen uniformly at random from $[q]^n$. Then $\expectinline{R_{\VB, \ell}}\geq \ell/2$ for every length $\ell$.
\end{lemma}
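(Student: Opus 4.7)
The plan is to reduce the lemma to a lower bound on the probability that $M_{V, \ell}$ is invertible over the field $\bbZ/q\bbZ$ (here we use that $q$ is prime). The key observation is that whenever $M_{V, \ell}$ is invertible, the linear system $M_{V, \ell} x = y$ has a unique solution, so every variable $x_i$ is uniquely recoverable and $R_{V, \ell} = \ell$. Since $R_{V, \ell} \geq 0$ always, this gives
\[
    \expectinline{R_{V, \ell}} \;\geq\; \ell \cdot \Pr[M_{V, \ell} \text{ is invertible}],
\]
so it suffices to show $\Pr[M_{V, \ell} \text{ is invertible}] \geq 1/2$.

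To bound this probability, I would view $D(V) := \det(M_{V, \ell})$ as a polynomial in the $2\ell - 1$ entries of $V$ that appear in $M_{V, \ell}$. Relabel these entries as $v_0, \ldots, v_{2\ell-2}$ with $v_k = V[n-2\ell+k]$, so that $M_{V, \ell}(i, j) = v_{\ell-1-i+j}$. Direct inspection shows that $v_0$ appears in exactly one cell of $M_{V,\ell}$ (the lower-left corner), and symmetrically $v_{2\ell-2}$ appears only at the upper-right corner. Hence $D$ is linear in each of these two ``corner'' variables. Writing $D = v_0 A + B$, one finds that $A$ is, up to sign, the determinant of the $(\ell-1) \times (\ell-1)$ Toeplitz submatrix with parameters $v_2, \ldots, v_{2\ell-2}$, which fits an induction on $\ell$.

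The plan is then to prove by induction on $\ell$ that $\Pr[D = 0] = 1/q$. The base case $\ell = 1$ is immediate, as $M_{V,1} = (v_0)$ is singular exactly when $v_0 = 0$. For the inductive step, condition on $(v_1, \ldots, v_{2\ell-2})$: if $A \neq 0$, the linearity of $D$ in the uniform variable $v_0$ yields $\Pr[D = 0 \mid A \neq 0] = 1/q$ exactly; if $A = 0$, then $D = B$ and we need $\Pr[B = 0 \mid A = 0] = 1/q$.

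The main obstacle is this last conditional probability. A naive appeal to linearity in $v_{2\ell-2}$ suffices as long as the corresponding cofactor does not simultaneously vanish, but one must handle the doubly-degenerate case where both corner cofactors are zero. The clean way around is to invoke the classical theorem of Daykin (1960) --- also reproved by Kaltofen and Lobo --- which counts exactly $(q-1)\,q^{2\ell - 2}$ invertible $\ell \times \ell$ Toeplitz matrices over $\mathbb{F}_q$, establishing $\Pr[M_{V, \ell} \text{ invertible}] = 1 - 1/q$. Either via this citation or by closing the induction via the joint symmetry in the two corner entries, we obtain
\[
    \expectinline{R_{V, \ell}} \;\geq\; \ell\left(1 - \tfrac{1}{q}\right) \;\geq\; \tfrac{\ell}{2}
\]
for every prime $q \geq 2$, as required.
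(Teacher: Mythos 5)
Your argument is correct and, in the form you actually close it, is essentially the paper's own proof: reduce to $\Pr[\MVL \text{ invertible}]$, note that invertibility forces $R_{\VB,\ell}=\ell$ while $R_{\VB,\ell}\geq 0$ always, and quote the Daykin (1960) / Kaltofen--Lobo count showing that a uniformly random $\ell\times\ell$ Toeplitz matrix over $\mathbb{F}_q$ (its $2\ell-1$ diagonal parameters being independent and uniform) is nonsingular with probability exactly $1-1/q\geq 1/2$. Be aware, though, that the self-contained induction you sketch does not close as stated --- in the branch $A=0$ you cannot simply switch to linearity in $v_{2\ell-2}$, since $A$ itself depends on $v_{2\ell-2}$, and this degenerate case is precisely what makes the exact singular count nontrivial --- so the citation is doing the real work, exactly as in the paper.
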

\begin{proof}
    Recall that $\MVL$ is an $\ell$$\times$$\ell$ Toeplitz matrix. It has been shown in~\cite{KL1996:Toeplitz} that for any $\ell$, out of all the $\ell$$\times$$\ell$ Toeplitz matrices over a finite field of $q$ elements, a fraction of exactly $(1-1/q)$ is non-singular. This fact was actually already established in~\cite{Day1960:Matrices} almost 40 years earlier but incidentally reproved in~\cite{KL1996:Toeplitz}. Since we have assumed in the statement of the lemma that $q$ is a prime, the ring $\bbZ/q\bbZ$ we operate in is indeed a finite field. The diagonals of $\MVL$ are independent and uniformly distributed in $[q]$, hence the probability that $\MVL$ is invertible is $(1-1/q)\geq 1/2$. If $\MVL$ is invertible then the recovery number $R_{\VB, \ell}=\ell$; there is a unique solution to the system of linear equations in Equation~(\ref{eq:matrix}). On the other hand, if $\MVL$ is not invertible then the recovery number will be lower. Thus, we can safely say that the expected recovery number $R_{\VB, \ell}$ is at least $\ell/2$, which proves the lemma.
\end{proof}

Before we give a lower bound for a random choice of $\VB$ in Theorem~\ref{thm:random-lowerbound} below, we state the following fact.

\begin{fact}
    \label{fac:leaves}
    For a balanced binary tree with $n$ leaves, the sum of the number of leaves in the subtree rooted at $v$, taken over all internal nodes $v$, is $n\log_2 n$.
\end{fact}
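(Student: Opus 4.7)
The plan is to prove this by a straightforward double-counting argument. Instead of summing over internal nodes $v$, I would switch the order of summation and sum over leaves. Observe that for each leaf $\ell$ and each internal node $v$, the leaf $\ell$ contributes $1$ to the count of leaves in the subtree rooted at $v$ if and only if $v$ is an ancestor of $\ell$. Hence the quantity in the statement equals
\begin{equation*}
    \sum_{v \in \calT} L(v) \cdot \mathbb{1}[\text{dummy}] = \sum_{\ell \text{ a leaf}} |\{v : v \text{ is an internal ancestor of } \ell\}|.
\end{equation*}
(Ignoring the indicator, which I only wrote to make the double-counting visually explicit.) Since the tree is balanced and binary with $n = 2^k$ leaves, every root-to-leaf path contains exactly $k = \log_2 n$ internal nodes. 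Therefore each leaf contributes $\log_2 n$ to the sum, and there are $n$ leaves, giving a total of $n \log_2 n$.

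Alternatively, and perhaps more directly, I would compute the sum level by level. Placing the root at depth $0$, a balanced binary tree with $n = 2^k$ leaves has exactly $2^d$ internal nodes at depth $d$ for $d \in \{0, 1, \dots, k-1\}$, and each such node roots a subtree containing $n/2^d$ leaves. The contribution of depth $d$ to the sum is thus $2^d \cdot (n/2^d) = n$, and summing over the $k = \log_2 n$ levels of internal nodes yields $n \log_2 n$, as claimed.

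There is no real obstacle here; the fact is essentially a restatement of the identity $\sum_{d=0}^{k-1} n = nk$ once one recognises the level-by-level structure of a complete balanced binary tree, so either of the two routes above suffices.
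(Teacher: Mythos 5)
Your proof is correct. The paper states this as a Fact without proof, and either of your two arguments (counting $\log_2 n$ internal ancestors per leaf, or summing $2^d \cdot (n/2^d) = n$ over the $\log_2 n$ internal levels) is the standard justification and fits the paper's standing assumption that $n$ is a power of two; the placeholder indicator in your first display is cosmetically sloppy but harmless, since the double count it gestures at is exactly right.
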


\begin{theorem}
    \label{thm:random-lowerbound}
    Suppose that $q$ is a prime. In the cell-probe model with $w$ bits per cell, any algorithm solving the online convolution problem on a vector $\VB$ and $\DE$, both chosen uniformly at random from $[q]^n$, will run in $\Omega\left(\frac{\delta}{w} n\log n\right)$ time in expectation, where $\delta=\log_2 q$.
\end{theorem}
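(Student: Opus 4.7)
The plan is to chain together the three ingredients that precede the theorem: Lemma~\ref{lem:conv-time} (time lower bound in terms of recovery numbers for a fixed $V$), Lemma~\ref{lem:invertible} (expected recovery number for a random $V$), and Fact~\ref{fac:leaves} (the total subtree-size sum on $\calT$ is $n\log_2 n$).

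First, I would fix an arbitrary deterministic algorithm $A$ and condition on $V$. For every fixed $V$, Lemma~\ref{lem:conv-time} gives that the expected running time of $A$ over a uniformly random $\DE\in[q]^n$ is at least
\begin{equation*}
    \frac{\delta}{2w}\sum_{v\in\calT} R_{\VB,L(v)} \;-\; \frac{n-1}{2}\,.
\end{equation*}
This bound is pointwise in $V$, so I can then take expectation over $V$ drawn uniformly from $[q]^n$ (treating the running time as a random variable over the joint distribution of $V$ and $\DE$, and applying Fubini/tower property). By linearity of expectation this yields
\begin{equation*}
    \expect{\text{time}} \;\geq\; \frac{\delta}{2w}\sum_{v\in\calT} \expect{R_{\VB,L(v)}} \;-\; \frac{n-1}{2}\,.
\end{equation*}

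Next I would invoke Lemma~\ref{lem:invertible}, which applies because $q$ is prime, to replace each $\expectinline{R_{\VB,L(v)}}$ by $L(v)/2$. This gives
\begin{equation*}
    \expect{\text{time}} \;\geq\; \frac{\delta}{4w}\sum_{v\in\calT} L(v) \;-\; \frac{n-1}{2}\,.
\end{equation*}
Finally, Fact~\ref{fac:leaves} evaluates $\sum_{v\in\calT} L(v) = n\log_2 n$, producing the claimed $\Omega\!\left(\tfrac{\delta}{w} n \log n\right)$ lower bound after absorbing the $-\tfrac{n-1}{2}$ additive term (which is dominated whenever $\log n \geq \Theta(w/\delta)$; otherwise the statement is trivial). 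To cover randomised algorithms, I would note that Yao's minimax principle allows us to convert this average-case bound against deterministic algorithms into a worst-case bound for randomised ones, as already remarked in the introduction.

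The step that requires the most care is the interchange of expectations over $V$ and $\DE$. Lemma~\ref{lem:conv-time} is proved for a fixed $V$ and a random $\DE$, while Lemma~\ref{lem:invertible} takes expectation over a random $V$; one must verify that applying the pointwise-in-$V$ inequality and then averaging over $V$ is legitimate, which it is since both sides are measurable functions of $V$ and the right-hand side is linear in the $R_{\VB,L(v)}$'s. Nothing else here is delicate: the composition is essentially bookkeeping once the three lemmas are in hand.
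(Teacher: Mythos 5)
Your proposal is correct and follows essentially the same route as the paper's proof: take the expectation over $V$ of the bound in Lemma~\ref{lem:conv-time}, then apply linearity of expectation together with Lemma~\ref{lem:invertible} and Fact~\ref{fac:leaves}. The extra care you devote to the interchange of expectations and to absorbing the additive $-(n-1)/2$ term is fine but is exactly the bookkeeping the paper leaves implicit.
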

\begin{proof}
    For a random vector $\VB$, a lower bound is obtained by taking the expectation of the bound in the statement of Lemma~\ref{lem:conv-time}. Using linearity of expectation and applying Lemma~\ref{lem:invertible} and Fact~\ref{fac:leaves} completes the proof.
\end{proof}

 \noindent
 \textit{Remark.}
      Theorem~\ref{thm:random-lowerbound} requires that $q$ is a prime but for an integer $\delta>1$, $q=2^\delta$ is not a prime. However,  we know that there is always at least one prime $p$ such that $2^{\delta-1} < p < 2^\delta$. Thus, Theorem~\ref{thm:random-lowerbound} is applicable for any integer $\delta$, only with an adjustment by at most one.

\subsubsection{Lower bound with a fixed vector $\VB$}

We demonstrate next that it is possible to design a fixed vector $\VB$ with guaranteed large recovery numbers. We will use this vector in the proof of Theorem~\ref{thm:conv}. The idea is to let $\VB$ consist of stretches of~0s interspersed by~1s. The distance between two succeeding 1s is an increasing power of two, ensuring that for half of the alignments in the interval $[t_1+1,t_2]$, all but exactly one element of $\Dint{t_0,t_1}$ are simultaneously aligned with a 0 in $\VB$. We define the binary vector $K_n\in[2]^n$ to be
\begin{equation*}
    K_n=(\dots\texttt{0000000000000{\bf 1}000000000000000{\bf 1}0000000{\bf 1}000{\bf 1}0{\bf 11}0})\,,
\end{equation*}
or formally,
\begin{equation}
    \label{eq:K}
    K_n[i] =
    \begin{cases}
        1, &\text{if $n-1-i$ is a power of two;}\\
        0, &\text{otherwise.}
    \end{cases}
\end{equation}

\begin{lemma}
    \label{lem:fixed}
    Suppose $\VB=K_n$ and $\ell\geq 1$ is a power of two. The recovery number $R_{\VB, \ell}\geq \ell/2$.
\end{lemma}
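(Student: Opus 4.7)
The plan is to read the structure of $\MVL$ directly off the definitions and locate enough rows that contain a single nonzero entry, which must sit on the diagonal. Unpacking: since $K_n[k]=1$ iff $n-1-k$ is a power of two, the entry $\MVL(i,j)=K_n[n-1-(\ell+i)+j]$ equals $1$ exactly when $\ell+i-j$ is a power of two. This immediately reconfirms the Toeplitz property (the entry depends only on $i-j$) and shows that the $1$s lie on the diagonals $j-i=\ell-2^k$ for $k\geq 0$ for which this value is in range.

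The key combinatorial observation is that, as $j$ ranges over $[\ell]$, the quantities $\ell+i-j$ traverse the length-$\ell$ window $\{i+1,i+2,\ldots,\ell+i\}$. When $\ell=2^m$ and $i\geq \ell/2$, this window is contained in $[\ell/2+1,\,2\ell-1]$. The power $\ell/2=2^{m-1}$ lies strictly below the window and $2\ell=2^{m+1}$ lies strictly above it, so the only power of two in the window is $\ell$ itself. Hence row $i$ has exactly one $1$, forced at $(i,i)$ since $\ell+i-j=\ell$ gives $j=i$.

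Therefore, for each of the $\ell/2$ rows $i\in\{\ell/2,\ell/2+1,\ldots,\ell-1\}$, the corresponding equation in the system of Equation~(\ref{eq:matrix}) degenerates to $x_i=y_i$ in $\bbZ/q\bbZ$, uniquely pinning down $\DE_{t_0+i}$. This produces $\ell/2$ uniquely recoverable variables and therefore $R_{\VB,\ell}\geq \ell/2$, which is what the lemma asks for.

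I do not expect any real obstacle: the whole argument is a one-step counting exercise bounding the number of powers of two in a length-$\ell$ interval, and nothing in it depends on $q$, so it works over $\bbZ/q\bbZ$ for every $q$. As a side remark, the same window analysis shows that no $1$ can occur strictly below the diagonal (that would require a power of two strictly between $\ell$ and $2\ell$), so $\MVL$ is in fact upper triangular with unit diagonal, and hence invertible over $\bbZ/q\bbZ$, yielding the stronger $R_{\VB,\ell}=\ell$; but the weaker $\ell/2$ bound is all that is used downstream and admits the clean description above.
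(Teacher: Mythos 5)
Your proof is correct and follows essentially the same route as the paper: from $\MVL(i,j)=1$ iff $\ell+i-j$ is a power of two, the rows $i=\ell/2,\dots,\ell-1$ each contain a single $1$, on the diagonal, so those $\ell/2$ equations read $x_i=y_i$ over $\bbZ/q\bbZ$ and pin down $\ell/2$ variables. Your side observation that $\MVL$ is in fact unit upper triangular (no power of two lies strictly between $\ell$ and $2\ell$), hence invertible over $\bbZ/q\bbZ$ for every $q$ and $R_{\VB,\ell}=\ell$, is also correct, though the paper only needs and states the $\ell/2$ bound.
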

\begin{proof}
    Recall that entry $\MVL(i,j)= \VB[n-1-(\ell+i)+j]$. Thus, $\MVL(i,j)=1$ if and only if $n-1-(n-1-(\ell+i)+j)=\ell+i-j$ is a power of two. It follows that for row $i=\ell/2,\dots,\ell-1$, $\MVL(i,j)=1$ for $j=i$ and $\MVL(i,j)=0$ for $j\neq i$. This implies that the recovery number $R_{\VB, \ell}$ is at least $\ell/2$.
\end{proof}

We finally give the proof of Theorem~\ref{thm:conv}.

\begin{proof}[Theorem~\ref{thm:conv}]
    We assume that $n$ is a power of two. Let $\VB=K_n$. It follows from Lemma~\ref{lem:fixed} and Fact~\ref{fac:leaves} that $\sum_{v\in \calT} R_{\VB,L(v)}\geq \sum_{v\in \calT} L(v)/2 = \Omega\left(n \log n\right)$. Note that $L(v)$ is a power of two for every node $v$ in $\calT$. For $\Delta$ chosen uniformly at random from $[q]^n$, apply Lemma~\ref{lem:conv-time} to obtain the expected running time $\Omega\left(\frac{\delta}{w} n\log n\right)$ over $n$ $\update$-operations.
\end{proof}

\section{Online multiplication} \label{sec:multiplication}

In this section we consider online multiplication of two $n$-digit numbers in base $q\geq 2$. For a non-negative integer $X$, let $X[i]$ denote the $i$th digit of $X$ in base $q$, where the positions are numbered starting with 0 at the right (lower-order) end. We think of $X$ padded with zeros to make sure that $X[i]$ is defined for arbitrarily large $i$. For $j\geq i$, we write $X[i\twodots j]$ to denote the integer that is written $X[j]\cdots X[i]$ in base $q$.  For example, let $X=15949$ (decimal representation) and $q=8$ (octal):
\begin{align*}
    &X=37115 \textrm{ (base 8)}  &X[0]=5  \\
    &X[1\twodots 3]= 711 \textrm{ (base 8)} = 457 \textrm{ (decimal)}  &X[3]=7\\
    &X[3\twodots 10]= 37 \textrm{ (base 8)} = 31 \textrm{ (decimal)}  &X[15]=0
\end{align*}

The \emph{online multiplication problem} is defined as follows. The input is two $n$-digit numbers $X,Y \in [q^n]$ in base $q$ (higher order digits may be zero). Let $Z=X\times Y$. We want to output the $n$ lower order digits of $Z$ in base $q$ (i.e.\@ $Z[0],\dots,Z[n-1]$) under the constraint that $Z[i]$ must be outputted before $Z[i+1]$ and when $Z[i]$ is outputted, we are not allowed to use any knowledge of the digits $X[i+1],\dots,X[n-1]$ and $Y[i+1],\dots,Y[n-1]$. We can think of the digits of $X$ and $Y$ arriving one pair at a time, starting with the least significant pair of digits, and we output the corresponding digit of the product of the two numbers seen so far.

We also consider a variant of the online multiplication problem when one of the two input numbers, say $Y$\!, is known in advance. That is, all its digits are available at every stage of the algorithm and only the digits of $X$ arrive in an online fashion. In particular we will consider the case when $Y=K_{q,n}$ is fixed, where we define $K_{q,n}$ to be the largest number in $[q^n]$ such that the $i$th bit in the binary expansion of $K_{q,n}$ is $1$ if and only if $i$ is a power of two (starting with $i=0$ at the lower-order end). We can see that the binary expansion of $K_{q,n}$ is the reverse of $K_{(n\log_2 q)}$ in Equation~(\ref{eq:K}). We will prove the following result.
\begin{theorem}
    \label{thm:mult}
     For any positive integers $\delta$ and $n$ in the cell probe model with $w$ bits per cell, the expected running time of any algorithm solving the online multiplication problem on two $n$-digit random numbers $X,Y\in [q^n]$ is $\Omega(\frac{\delta}{w} n\log n)$, where $q=2^\delta$ is the base. The same bound holds even under full access to every digit of $Y$\!, and when $Y=K_{q,n}$ is fixed.
\end{theorem}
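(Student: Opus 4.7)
The plan is to run the argument of Theorem~\ref{thm:conv} on multiplication, with the random sequence $\DE$ replaced by the random $n$-digit input $X\in[q^n]$ (each base-$q$ digit uniform and independent) and $\VB$ replaced by the fixed $Y=K_{q,n}$, over the same balanced lower-bound tree $\calT$ on the $n$ time steps. For an internal node $v\in\calT$ with associated intervals $[t_0,t_1]$ and $[t_1+1,t_2]$ of common length $\ell$, let $X'=X[t_0\twodots t_1]$ be the random middle block and let $\Xfix$ be a fixed outcome of the digits $\Xc$. The encoding argument of Lemma~\ref{lem:entropy-upper} transfers essentially unchanged,
\[H\bigl(Z[t_1+1\twodots t_2]\mid \Xc=\Xfix\bigr)\;\leq\;w+2w\cdot\expect{|\IT{v}|\mid\Xc=\Xfix},\]
by simulating the algorithm on the known $\Xfix$ and $Y$, jumping across $[t_0,t_1]$ with the stored cells of $\IT{v}$, and resuming on $[t_1+1,t_2]$ to read off the output digits.

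The main obstacle is the matching entropy lower bound $\Omega(\delta\ell)$. Decomposing $X = X^{\ast}+X'\cdot q^{t_0}$ with $X^{\ast}$ determined by $\Xfix$, we have $Z=(X^{\ast}\cdot Y + X'\cdot Y\cdot q^{t_0})\bmod q^n$; the first summand is a known constant and the digits of $X'\cdot Y\cdot q^{t_0}$ at positions $[t_1+1,t_2]$ coincide with the digits of $X'\cdot Y$ at positions $[\ell,2\ell-1]$. The governing Toeplitz matrix $M(i,j)=Y[\ell+i-j]$ inherits the identity-like structure from Lemma~\ref{lem:fixed}: because the binary expansion of $K_{q,n}$ carries a $1$ exactly at power-of-two positions, for $i\in[\ell/2,\ell-1]$ the only $k=\ell+i-j\in[i+1,\ell+i]$ whose base-$q$ block $[k\delta,(k+1)\delta-1]$ contains a power of two is $k=\ell$, with $Y[\ell]=1$ (this holds whenever $\ell\delta$ is itself a power of two, which covers every level of $\calT$ when $\delta$ is a power of two). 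Hence the column sum at position $\ell+i$ of $X'\cdot Y$ equals $X'[i]$ alone. The genuine novelty over convolution is the carry handling: the actual digit of $X'\cdot Y$ at position $\ell+i$ is $(X'[i]+c'_i)\bmod q$ for a product carry $c'_i$, and $Z[t_1+1+i]$ further differs from $(X^{\ast}\cdot Y)[t_1+1+i]+(X'\cdot Y)[\ell+i]$ by an addition carry in $\{0,1\}$. Because $K_{q,n}$ has only $O(\log(n\delta))$ nonzero bits in its entire binary expansion, every column sum in $X'\cdot Y$ is $O(q\log n)$ and so $c'_i\leq O(\log n)$ throughout; passing $O(\log\log n)$ bits of side information to fix the carries at the boundary $i=\ell/2$ lets us invert the equations sequentially — recover $X'[\ell/2]$ from the first window digit, update the product carry (which stabilises to at most $1$ once in the good region since $X'[i]<q$), recover $X'[\ell/2+1]$, and so on — giving $\delta\ell/2-O(\log\log n)$ bits of entropy.

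Combining the two bounds as in Corollary~\ref{cor:conv-IT} gives $\expect{|\IT{v}|}=\Omega(\delta\ell/w)$ for nodes large enough that the entropy bound is nontrivial, and summing as in Lemma~\ref{lem:conv-time} together with Fact~\ref{fac:leaves} produces the claimed $\Omega(\frac{\delta}{w}n\log n)$ expected running time; small levels where $\delta\ell$ is comparable to $\log\log n$ contribute nothing but are dominated in the sum by the upper levels, whose subtree-leaf sums are still $\Theta(n\log n)$. The bottleneck of the whole argument is the carry analysis of the middle paragraph — the rest mirrors the convolution proof with $\DE\leftrightarrow X$ and $\VB\leftrightarrow Y=K_{q,n}$.
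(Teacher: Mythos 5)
Your scaffolding (the tree $\calT$, the encoding bound of Lemma~\ref{lem:mult-upper}, the per-node entropy comparison, and the summation via Fact~\ref{fac:leaves}) matches the paper, but your key step is genuinely different. The paper never redoes a carry analysis: it defines a segment $Y'$ to be \emph{retrorse} when the map from $X'$ to $Z'$ is at most four-to-one, sets $I_{Y,\ell}\in\{0,\ell\}$ accordingly (Lemma~\ref{lem:mult-lower}, Corollary~\ref{cor:mult-IT}, Lemma~\ref{lem:mult-time}), and imports the needed injectivity facts from Paterson, Fischer and Meyer~\cite{PFM:1974} -- Lemma~\ref{lem:PFMlemma1} for $Y=K_{2,n}$ and the corollary of their Lemma~5 (Lemma~\ref{lem:PFMlemma5}) for random $Y'$ -- after reducing base $q=2^\delta$ to binary via $K_{q,n}=K_{2,\delta n}$ and the fact that $\delta\ell$ is a power of two. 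You instead prove the needed recoverability for $Y=K_{q,n}$ directly: the identity-like window of the Toeplitz structure, plus explicit carry bookkeeping with boundary side information. Your recovery loop is sound: given the multiplication carry into position $\ell+\ell/2$ of $X'\cdot Y$ and the single addition-carry bit, each $X'[i]$, $i\in[\ell/2,\ell-1]$, is determined modulo $q$ (hence exactly) from the corresponding output digit, and both carries can then be updated deterministically, since the column sum at those positions is exactly $X'[i]$. Two quantitative slips: the column sums are $O(q^{2}\log n)$ and the stabilised multiplication carry is $O(q+\log n)$, not $O(q\log n)$ and $O(\log n)$, so the side information costs $O(\delta+\log\log n)$ bits rather than $O(\log\log n)$. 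This is harmless: the per-node loss is still $o(\delta\ell)$ except at the bottom levels, which you correctly discard, and the surviving levels still sum to $\Theta(n\log n)$. So for the fixed $Y=K_{q,n}$ claim (and a fortiori the ``full access to $Y$'' claim) your route works and is more self-contained than the paper's, at the price of re-deriving what Lemma~\ref{lem:PFMlemma1} already gives (indeed a two-to-one map with no side information at all).

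There is, however, a genuine gap against the full statement: the theorem also asserts the $\Omega(\frac{\delta}{w}n\log n)$ bound when \emph{both} $X$ and $Y$ are uniformly random, and your proposal never addresses this case. A lower bound for the single input $Y=K_{q,n}$ says nothing about the expectation over uniform $Y$, since that particular $Y$ has probability $q^{-n}$; an algorithm could in principle be fast on almost all $Y$. Moreover, your mechanism does not transfer: the carry control relies essentially on the sparsity of the binary expansion of $K_{q,n}$ and on the identity-like window of the associated Toeplitz structure, neither of which holds for a typical $Y$, where every output digit in the window is entangled with all of $X'$ through large, $X'$-dependent carries. The paper handles this case with Lemma~\ref{lem:PFMlemma5}: for at least half of all values of $Y'=Y[0\twodots(2\ell-1)]$ the map from $X'$ to $Z'$ is at most four-to-one, whence $\expectinline{I_{Y,\ell}}\geq\ell/2$ for random $Y$ (Lemma~\ref{lem:retrorse}), which then feeds into the same per-node and summation machinery. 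You would need to add an argument of this kind, or invoke the same result of~\cite{PFM:1974}, to obtain the random-$Y$ part of the theorem.
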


It suffices to prove the lower bound for the case when we have full access to every digit of $Y$; we could always ignore digits. We prove Theorem~\ref{thm:mult} using the same approach as for the online convolution problem. Here the $\update$-operation delivers a new digit of $X$, which is chosen uniformly at random from $[q]$, and outputs the corresponding digit of the product of $X$ and $Y$.

For $t_0,t_1,t_2\in[n]$ such that $t_0\leq t_1< t_2$, we write $\Xc$ to denote every digit of $X$ (in base~$q$) except for those at position $t_0$ through $t_1$. It is helpful to think of $\Xc$ as a vector of digits rather than a single number. We write $\Xfix$ to denote a fixed choice of $\Xc$. During the interval $[t_1+1,t_2]$, we output $Z[(t_1+1)\twodots t_2]$. The information transfer is defined as before, and Lemma~\ref{lem:entropy-upper} is replaced with the following lemma.

\begin{lemma}
    \label{lem:mult-upper}
    The entropy
    \begin{align*}
        H(Z[(t_1+1)\twodots t_2] \,&\mid\, \Xc = \Xfix) \leq\\
            &w + 2w\cdot \expect{|\IT{t_0,t_1,t_2}| \,\mid\, \Xc = \Xfix}\,.
    \end{align*}
\end{lemma}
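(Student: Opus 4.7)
The plan is to replay the encoding/decoding argument of Lemma~\ref{lem:entropy-upper} in the multiplication setting. The conditional entropy of $Z[(t_1+1)\twodots t_2]$ given $\Xfix$ is upper-bounded by the expected length of any encoding of these digits, with $\Xfix$ and the known multiplicand $Y$ treated as side information available to the decoder. I will use the information transfer itself as the encoding: prefix the message with the number $|\IT{t_0,t_1,t_2}|$ written in $w$ bits, then list, for each $c \in \IT{t_0,t_1,t_2}$, the address of $c$ (at most $w$ bits, under the standard assumption that a cell can hold any memory address) together with its $w$-bit contents. The total expected length is $w + 2w\cdot \expect{|\IT{t_0,t_1,t_2}| \mid \Xc = \Xfix}$, matching the claimed bound.

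To verify validity I will describe a decoder that, given $\Xfix$, $Y$, and the encoded list, reconstructs $Z[(t_1+1)\twodots t_2]$. The decoder simulates the algorithm from time $0$ to $t_0-1$ using the digits $X[0],\dots,X[t_0-1]$ from $\Xfix$ and the known $Y$. It then skips the interval $[t_0,t_1]$, whose digits are the only inputs not available to it, and resumes the simulation from time $t_1+1$ to $t_2$, drawing the needed digits $X[t_1+1],\dots,X[t_2]$ from $\Xfix$. Every read during the resumed simulation is served either from the stored list (if the cell's address appears there) or from the decoder's own simulated memory; a write to a cell whose address is in the stored list removes that entry. The outputs of $\update$ during $[t_1+1,t_2]$ are then exactly $Z[(t_1+1)\twodots t_2]$, as desired.

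The only step requiring care is the case analysis that guarantees every read not served by the stored list nonetheless has the correct value in the decoder's simulated memory. By the definition of $\IT{t_0,t_1,t_2}$, for any cell $c$ read at time $t_r \in [t_1+1,t_2]$, either the last write to $c$ before $t_r$ lies in $[t_0,t_1]$ with no intervening write in $[t_1+1,t_r]$ (in which case $c$ is in the information transfer and hence in the stored list), or the last write lies outside $[t_0,t_1]$ and thus occurs at a time the decoder simulates directly. I do not expect any substantive obstacle, since the argument is structurally identical to the convolution version: the only adaptation is that the side information is $(\Xfix, Y)$ rather than $(\Dfix, \VB)$, and the quantity being recovered is $Z[(t_1+1)\twodots t_2]$ rather than $\ANSint{t_1+1,t_2}$.
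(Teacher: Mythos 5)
Your proposal is correct and follows exactly the argument the paper intends: the paper omits a proof of this lemma precisely because it is the encoding/decoding argument of Lemma~\ref{lem:entropy-upper} transplanted verbatim, with $(\Xfix, Y)$ as the decoder's side information and $Z[(t_1+1)\twodots t_2]$ as the recovered outputs. Your case analysis for reads not served by the stored list matches the paper's reasoning in the convolution case.
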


\subsection{Retrorse numbers and the lower bound}

In Figure~\ref{fig:retrorse},\insertmultfigure the three numbers $X$, $Y$ and $Z=X\times Y$ are illustrated with some segments of their digits labelled $X'$, $Y'$ and $Z'$. Informally, we say that $Y'$ is \emph{retrorse} if $Z'$ depends heavily on $X'$. We have borrowed the term from Paterson, Fischer and Meyer~\cite{PFM:1974}, however, we give it a more precise meaning, formalised below.

Suppose $[t_0,t_1]$ and $[t_1+1,t_2]$ both have the same length $\ell$. For notational brevity, we write $X'$ to denote $X[t_0\twodots t_1]$, $Y'$ to denote $Y[0\twodots (2\ell-1)]$ and $Z'$ to denote $Z[(t_1+1)\twodots t_2]$ (see Figure~\ref{fig:retrorse}). We say that $Y'$ is \emph{retrorse} if for any fixed values of $t_0$, $\Xc$ (the digits of $X$ outside $[t_0, t_1]$) and $Y[2\ell\twodots (n-1)]$, each value of $Z'$ can arise from at most four different values of $X'$. That is to say there is at most a four-to-one mapping from possible values of $X'$ to possible values of $Z'$. We define $I_{Y,\ell}=\ell$ if $Y'$ is retrorse, otherwise $I_{Y,\ell}=0$. Note that $I_{Y,\ell}$ only depends on $Y$ and $\ell$. We will use $I_{Y,\ell}$ similarly to the recovery number $R_{\VB, \ell}$ from Section~\ref{sec:conv-recover} and replace Lemma~\ref{lem:entropy-lower} with Lemma~\ref{lem:mult-lower} below, which combined with Lemma~\ref{lem:mult-upper} gives us Corollary~\ref{cor:mult-IT}.

\begin{lemma}
    \label{lem:mult-lower}
    If the intervals $[t_0,t_1]$ and $[t_1+1,t_2]$ both have the same length~$\ell$, then the entropy
    \begin{equation*}
        H(Z[(t_1+1)\twodots t_2] \,\mid\, \Xc = \Xfix)
            \,\geq\, \frac{\delta I_{Y,\ell}}{4} - \frac{1}{2}\,.
    \end{equation*}
\end{lemma}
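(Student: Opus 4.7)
The proof naturally splits into two cases according to whether $Y'$ is retrorse, since the definition of $I_{Y,\ell}$ is binary (either $\ell$ or $0$).

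The easy case is when $Y'$ is not retrorse, so $I_{Y,\ell}=0$ and the claimed bound degenerates to $-1/2$. Since entropy is non-negative this holds trivially, and nothing further is needed.

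In the substantive case, $Y'$ is retrorse and $I_{Y,\ell}=\ell$, so I need to show
\[
H(Z[(t_1+1)\twodots t_2] \,\mid\, \Xc = \Xfix) \;\geq\; \frac{\delta\ell}{4} - \frac{1}{2}.
\]
The plan is to exploit the observation that conditional on $\Xc=\Xfix$ the only remaining randomness in $X$ sits in $X'=X[t_0\twodots t_1]$. Because $X$ is drawn uniformly from $[q^n]$ its base-$q$ digits are mutually independent and uniform on $[q]$, so conditioning on the other $n-\ell$ digits being equal to $\Xfix$ leaves $X'$ uniformly distributed on $[q^\ell]$. Once $\Xfix$, $Y$ and $t_0$ are fixed, the segment $Z'=Z[(t_1+1)\twodots t_2]$ is a deterministic function of $X'$, and by the retrorse hypothesis this function is at most $4$-to-$1$. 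Hence each value $z$ in the range satisfies $\Pr(Z'=z\mid \Xc=\Xfix) \leq 4/q^\ell$, and a direct entropy computation gives
\[
H(Z' \mid \Xc = \Xfix) \;\geq\; \log_2\!\left(\frac{q^\ell}{4}\right) \;=\; \delta\ell - 2.
\]
This is in fact strictly stronger than what is claimed: for $\delta\ell \geq 2$ one has $\delta\ell - 2 \geq \delta\ell/4 - 1/2$ since the difference is $3\delta\ell/4 - 3/2 \geq 0$, while for $\delta\ell < 2$ the right-hand side $\delta\ell/4 - 1/2$ is already non-positive and the bound is free.

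The main obstacle here is not really in the proof itself but rather in having packaged the combinatorial content correctly into the definition of \emph{retrorse}. Once the at-most-$4$-to-$1$ property is in hand the lemma is just an accounting step: identify the source of randomness (the digits of $X'$), verify that conditioning on $\Xc$ preserves their uniformity, and convert the bounded preimage size into an entropy lower bound via the elementary $-\sum p_z \log_2 p_z$ inequality. The genuine work will be done later, when one has to demonstrate that for the fixed choice $Y=K_{q,n}$ (and the intervals arising from the lower-bound tree $\calT$) the corresponding $Y'$ is in fact retrorse sufficiently often — analogous to the role played by Lemma~\ref{lem:fixed} in the convolution argument.
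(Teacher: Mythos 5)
Your proof is correct and takes essentially the same route as the paper's: the case split on whether $Y'$ is retrorse, the observation that conditioned on $\Xc=\Xfix$ the segment $X'$ is uniform on $[q^\ell]$ and $Z'$ is a deterministic at-most-four-to-one function of it, and an elementary entropy computation. The only (harmless) difference is that you bound the entropy by $\log_2(q^\ell/4)=\delta\ell-2$ from the bound $p_z\leq 4/q^\ell$, which is slightly stronger than the paper's direct estimate $\frac{q^\ell}{4}\cdot\frac{1}{q^\ell}\cdot\log_2(q^\ell/4)=\frac{\delta\ell}{4}-\frac{1}{2}$, and you then check this implies the stated form.
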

\begin{proof}
    The lemma is trivially true when $I_{Y,\ell}=0$, so suppose that $I_{Y,\ell}=\ell$. Then $Y[0\twodots (2\ell-1)]$ is retrorse, which implies that at most four distinct values of $X[t_0\twodots t_1]$ yield the same value of $Z[(t_1+1)\twodots t_2]$. There are $q^\ell$ possible values of $X[t_0\twodots t_1]$, each with the same probability, hence, from the definition of entropy,
    \begin{align*}
        H(Z[(t_1+1)\twodots t_2] \,&\mid\, \Xc = \Xfix) \,\geq\\
            &\frac{q^\ell}{4} \cdot \frac{1}{q^\ell} \cdot \log_2 \left(\frac{1}{4/q^\ell}\right)
                \,=\, \frac{\delta \ell}{4} - \frac{1}{2}\,.\tag*{\qedhere}
    \end{align*}
\end{proof}


\begin{corollary}
    \label{cor:mult-IT}
    For any fixed number $Y$\!, two intervals $[t_0,t_1]$ and $[t_1+1,t_2]$ of the same length $\ell$, and any algorithm solving the online multiplication problem on $X$ chosen uniformly at random from $[q^n]$,
    \begin{equation*}
        \expect{\,|\IT{t_0,t_1,t_2}|\,}\geq \frac{\delta I_{Y,\ell}}{8w} - 1\,.
    \end{equation*}
\end{corollary}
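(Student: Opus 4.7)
The plan is to mimic exactly the derivation of Corollary~\ref{cor:conv-IT} in the convolution setting, substituting the multiplication-specific entropy bounds. First I would fix an arbitrary value $\Xfix$ of the external digits $\Xc$ and combine the upper bound from Lemma~\ref{lem:mult-upper} with the lower bound from Lemma~\ref{lem:mult-lower} on the conditional entropy $H(Z[(t_1{+}1)\twodots t_2]\mid \Xc=\Xfix)$. This chains into
\[
\frac{\delta I_{Y,\ell}}{4}-\frac{1}{2} \;\leq\; w + 2w\cdot \expect{|\IT{t_0,t_1,t_2}|\mid \Xc=\Xfix}\,,
\]
from which rearranging yields
\[
\expect{|\IT{t_0,t_1,t_2}|\mid \Xc=\Xfix} \;\geq\; \frac{\delta I_{Y,\ell}}{8w} - \frac{1}{2} - \frac{1}{4w} \;\geq\; \frac{\delta I_{Y,\ell}}{8w} - 1\,,
\]
where the last inequality uses $w\geq 1$.

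The second and final step is to remove the conditioning by taking expectation over $\Xc$ under the random $X$. Since the right-hand side of the conditional bound does not depend on $\Xfix$ (note that $I_{Y,\ell}$ depends only on $Y$ and $\ell$, not on $X$), the tower property of expectation gives
\[
\expect{|\IT{t_0,t_1,t_2}|} \;=\; \expect{\expect{|\IT{t_0,t_1,t_2}|\mid \Xc}} \;\geq\; \frac{\delta I_{Y,\ell}}{8w} - 1\,,
\]
which is the desired conclusion.

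There is essentially no obstacle: all the hard work sits in Lemmas~\ref{lem:mult-upper} and~\ref{lem:mult-lower}, the first being the standard encoding argument and the second being the direct consequence of the retrorse property that a random $X'$ has at least $q^\ell/4$ distinct images in $Z'$. The only minor care point is checking the arithmetic of the additive slack ($\tfrac{1}{2}+\tfrac{1}{4w}\leq 1$) so that the stated constant $-1$ is valid, and confirming that the independence/constancy of $I_{Y,\ell}$ in $X$ legitimately lets the conditional bound pass through the outer expectation unchanged.
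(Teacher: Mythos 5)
Your proposal is correct and follows exactly the route the paper intends: Corollary~\ref{cor:mult-IT} is obtained by chaining Lemma~\ref{lem:mult-lower} against the encoding bound of Lemma~\ref{lem:mult-upper} for each fixed $\Xfix$, rearranging (your slack computation $\tfrac{1}{2}+\tfrac{1}{4w}\leq 1$ is valid), and then averaging over $\Xc$, just as Corollary~\ref{cor:conv-IT} was derived from Lemmas~\ref{lem:entropy-upper} and~\ref{lem:entropy-lower}.
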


We take the same approach as in Section~\ref{sec:conv-lower} and use a lower-bound tree $\calT$ with $n$ leaves to obtain the next lemma. The proof is identical to the proof of Lemma~\ref{lem:conv-time}, only that we use Corollary~\ref{cor:mult-IT} instead of Corollary~\ref{cor:conv-IT}.

To avoid technicalities we will assume that $n$ and $\delta$ are both powers of two and we let the base $q=2^\delta$.

\begin{lemma}
    \label{lem:mult-time}
    For any fixed number $Y$ and any algorithm solving the online multiplication problem, the expected running time of the algorithm with the number $X$ chosen uniformly at random from $[q^n]$ is at least
    \begin{equation*}
        \frac{\delta}{8w}\sum_{v\in \calT} I_{Y,L(v)} - (n-1)\,.
    \end{equation*}
\end{lemma}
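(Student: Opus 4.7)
The plan is to carry out the argument of Lemma~\ref{lem:conv-time} essentially verbatim, but with $X$ replacing $\Delta$ as the random input, $Y$ playing the role of the fixed side information (in place of $V$), and Corollary~\ref{cor:mult-IT} invoked in place of Corollary~\ref{cor:conv-IT}. The only model-dependent ingredient is the bound on $\expectinline{|IT(v)|}$, and that has already been repackaged in Corollary~\ref{cor:mult-IT}; everything else is a bookkeeping exercise on the lower-bound tree $\calT$.

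First I would fix an arbitrary realisation of $X\in[q^n]$ and observe that the number of read instructions executed is at least $\sum_{v\in\calT}|IT(v)|$. The standard charging argument applies: if a cell is read at time $t_r$ and was last written at some earlier time $t_w < t_r$ (ignoring the case $t_w = t_r$), then by definition of the information transfer this read is counted in $IT(v)$ for exactly one node $v$, namely the lowest common ancestor of the leaves $t_w$ and $t_r$ in $\calT$. Hence no read is ever double-counted in the sum, and the pointwise inequality holds for every $X$.

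Next I would take expectation over $X$ drawn uniformly from $[q^n]$ (with $Y$ fixed). By linearity, the expected number of reads is at least $\sum_{v\in\calT}\expectinline{|IT(v)|}$. For each internal node $v$ the two intervals associated with $v$ have common length $L(v)$, so Corollary~\ref{cor:mult-IT} yields $\expectinline{|IT(v)|}\geq \frac{\delta I_{Y,L(v)}}{8w}-1$. Summing over the $n-1$ internal nodes of the balanced binary tree $\calT$ on $n$ leaves, the constant terms contribute $-(n-1)$ and the main terms combine into $\frac{\delta}{8w}\sum_{v\in\calT}I_{Y,L(v)}$, which is exactly the lower bound claimed in the statement.

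There is no real technical obstacle here: the only thing one must be careful about is that Corollary~\ref{cor:mult-IT} is stated for a fixed $Y$ with $X$ uniform, which matches the hypothesis of the present lemma exactly, and that the charging of reads to lowest common ancestors does not depend on the nature of the update (convolution versus multiplication)—it is a purely structural fact about how cells written in one time window can influence outputs in a later window through the memory, which is the same in both problems.
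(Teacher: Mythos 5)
Your proposal is correct and follows exactly the route the paper takes: the paper's proof of this lemma is literally "identical to the proof of Lemma~\ref{lem:conv-time}, only using Corollary~\ref{cor:mult-IT} instead of Corollary~\ref{cor:conv-IT}", which is precisely the charging-to-lowest-common-ancestor argument plus linearity of expectation that you spell out. The per-node constants $-1$ summed over the $n-1$ internal nodes giving $-(n-1)$ is also exactly as in the paper.
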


Before giving the proof of Theorem~\ref{thm:mult}, we bound the value of $I_{Y,\ell}$ for both a random number $Y$ and $Y=K_{q,n}$. In order to do so, we will use the following two results by Paterson, Fischer and Meyer~\cite{PFM:1974} which apply to binary numbers. The lemmas are stated in our notation, but the translation from the original notation of~\cite{PFM:1974} is straightforward.

\begin{lemma}[Lemma~1 of~\cite{PFM:1974}]
    \label{lem:PFMlemma1}
      For the base $q=2$ and fixed values of $t_0$, $\ell$, $n$ and $\Xc$ (where $t_1=t_0+\ell-1$), such that $\ell$ is a power of two, each value of $Z'$ can arise from at most two values of $X'$ when $Y=K_{2,n}$.
\end{lemma}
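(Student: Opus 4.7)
\emph{Strategy.} My plan is to bound the integer difference $D := X'_1 - X'_2$ between any two preimages of a common $Z'$ that share the same $\Xc$, exploiting two structural facts about $Y = K_{2,n}$: since $\ell$ is a power of two, bit $\ell$ of $Y$ is set; and the remaining bits of $Y$ above position $\ell$ sit at positions $\geq 2\ell$, so they vanish modulo $2^{2\ell}$. Once the set of valid $D$'s is confined to a narrow window, a triangle-inequality argument on pairwise differences will rule out three coinciding preimages.

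\emph{Reduction to a modular condition.} Suppose $X_1, X_2$ share $\Xc$ and yield the same $Z'$. Write each $Z_i = u_i + v_i\cdot 2^{t_1+1}$ with $u_i \in [0, 2^{t_1+1})$; equality of bits $[t_1+1, t_2]$ is $v_1 \equiv v_2 \pmod{2^\ell}$. Combined with $Z_1 - Z_2 = D\,Y\,2^{t_0}$ and the fact that bits of $Z_1, Z_2$ below $t_0$ already agree (they depend only on the fixed portion $X[0\twodots t_0-1]$), this gives $DY \equiv a \pmod{2^{2\ell}}$ for the integer $a := (u_1-u_2)/2^{t_0}$, which satisfies $|a| < 2^\ell$. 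Setting $W := 2^\ell + Y_L$ with $Y_L := \sum_{2^k < \ell} 2^{2^k}$ absorbs $Y$'s high bits modulo $2^{2\ell}$ and yields
\[
DW \equiv a \pmod{2^{2\ell}}, \qquad |D|,\,|a| < 2^\ell,\qquad W \in (2^\ell, 2^{\ell+1}).
\]

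\emph{Clustering and triangle argument.} I claim all valid positive $D \in (0, 2^\ell)$ lie in $\{D^\ast - 1,\,D^\ast\}$, where $D^\ast := \lceil 2^{2\ell}/W\rceil$. Indeed, $DW < 2^{2\ell+1}$ means $DW \bmod 2^{2\ell}$ wraps around $2^{2\ell}$ at most once as $D$ grows: for $D < D^\ast$ the unwrapped residue $DW \geq W > 2^\ell$ can have signed magnitude below $2^\ell$ only when $DW > 2^{2\ell} - 2^\ell$, forcing $D = D^\ast - 1$; for $D \geq D^\ast$ the wrapped residue $DW - 2^{2\ell}$ grows by $W > 2^\ell$ at each unit step, so only $D = D^\ast$ survives. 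By symmetry $D \mapsto -D$, valid negative $D$'s lie in $\{-D^\ast, -D^\ast + 1\}$. Now if three distinct $X'_a < X'_b < X'_c$ all gave the same $Z'$, their positive pairwise differences $b-a$, $c-b$, $c-a$ would each lie in $\{D^\ast - 1, D^\ast\}$; but $c-a = (b-a) + (c-b) \geq 2(D^\ast - 1)$ together with $c-a \leq D^\ast$ forces $D^\ast \leq 2$. Since $W < 2^{\ell+1}$ gives $D^\ast > 2^{\ell-1}$, this contradicts $\ell \geq 2$; the case $\ell = 1$ is trivial since $X'$ has only two possible values.

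\emph{Main obstacle.} The delicate step is confining the valid $D$'s to a two-element window. This rests on the tight estimate $W < 2^{\ell+1}$, which in turn uses that $Y_L$ contains only the $O(\log \ell)$ bits at positions $1, 2, 4, \ldots, \ell/2$ and hence $Y_L < 2^{\ell/2 + 1}$. If $W$ were appreciably larger, either a second wrap-around modulo $2^{2\ell}$ could introduce extra valid $D$'s or the window $\{D^\ast - 1, D^\ast\}$ would widen, and the triangle argument would no longer yield the sharp ``at most two'' bound. A secondary bookkeeping subtlety is consistently treating signed versus unsigned residues throughout the wrap-around case analysis.
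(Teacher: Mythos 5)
Your argument is correct in substance, and it is worth pointing out that the paper itself contains no proof of this statement: it is quoted (in translated notation) as Lemma~1 of Paterson, Fischer and Meyer~\cite{PFM:1974}, so what you have produced is a self-contained derivation of a cited result rather than an alternative to an in-paper argument. Your reduction is sound: two preimages $X_1,X_2$ sharing the fixed digits outside $[t_0,t_1]$ satisfy $Z_1-Z_2=D\,Y\,2^{t_0}$, the low bits of $Z_1,Z_2$ agree modulo $2^{t_0}$, and equality of the output window gives $DY\equiv a \pmod{2^{2\ell}}$ with $|D|,|a|<2^\ell$. The structural fact you need about $K_{2,n}$ --- that modulo $2^{2\ell}$ it equals $2^\ell+Y_L$ with $Y_L<2^{\ell/2+1}$, because no set bit lies strictly between $\ell$ and $2\ell$ --- is exactly where the hypothesis that $\ell$ is a power of two enters, and the clustering-plus-triangle-inequality step then correctly rules out three preimages for $\ell\geq 2$, with $\ell=1$ trivial.

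One step needs an extra line. For $D\geq D^\ast$ you argue that the wrapped residue $DW-2^{2\ell}$ exceeds $2^\ell$ after the first step, but validity of $D$ would also be achieved if that residue landed within $2^\ell$ of the top of the modulus, i.e.\ if $DW>2^{2\ell+1}-2^\ell$ (an ``almost second wrap'', corresponding to negative $a$); your stated bound $DW<2^{2\ell+1}$ does not exclude this. It is excluded by the sharper estimate
\begin{equation*}
DW \;\leq\; (2^\ell-1)W \;=\; 2^\ell W - W \;<\; 2^{2\ell+1}-2^\ell,
\end{equation*}
which uses both $W<2^{\ell+1}$ and $W>2^\ell$. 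With that line added, the containment of all valid positive differences in $\{D^\ast-1,D^\ast\}$ is complete and the rest of your argument (including $D^\ast>2^{\ell-1}\geq 2$ for $\ell\geq 2$, contradicting $D^\ast\leq 2$) goes through.
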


\begin{lemma}[Corollary of Lemma~5 in~\cite{PFM:1974}]
    \label{lem:PFMlemma5}
        For the base $q=2$ and fixed values of $t_0$, $\ell$, $n$ and $\Xc$ (where $t_1=t_0+\ell-1$), such that $\ell$ is a power of two, at least half of all possible values of $Y'$ have the property that each value of $Z'$ can arise from at most four different values of $X'$.
\end{lemma}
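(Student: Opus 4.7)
The result is stated as a corollary of Lemma~5 of Paterson, Fischer, and Meyer~\cite{PFM:1974}; I would reproduce their argument.

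First, I would set up the algebraic framework by isolating the dependence of $Z'$ on $X'$. Writing $X = X_L + X'\cdot 2^{t_0} + X_H\cdot 2^{t_0+\ell}$ with $X_L, X_H$ determined by $\Xfix$, the product $Z = XY$ decomposes as a fixed part plus $X'\cdot Y\cdot 2^{t_0}$. The bits of $Z$ in the window $[t_0+\ell,\,t_0+2\ell-1]$ are therefore determined, modulo the carry into position $t_0+\ell$ from below, by the bits $[\ell,\,2\ell-1]$ of $X'\cdot Y$. These bit positions see $Y$ only through $Y'=Y[0\twodots(2\ell-1)]$, so I can write $Z'=g_{Y'}(X')+c$, where $c$ depends only on $\Xfix$ and $Y[2\ell\twodots(n-1)]$, and $g_{Y'}$ is, up to carry corrections, the middle $\ell$-bit window of the product $X'\cdot Y'$.

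Second, I would argue that at least half of the $Y'\in[2^{2\ell}]$ make $g_{Y'}$ at most $4$-to-$1$. A natural route is to compare an arbitrary $Y'$ to the specific value $K = K_{2,n}[0\twodots(2\ell-1)]$, for which Lemma~\ref{lem:PFMlemma1} already gives a $2$-to-$1$ bound. Using near-linearity of multiplication in $Y'$ and writing $Y' = K + D$, one relates the fibers of $g_{Y'}$ to those of $g_K$ and $g_D$: as long as $D$ avoids a specific thin set (those $D$ for which the interaction with $g_K$ can more than double the maximum fiber), the fiber bound of $g_{Y'}$ is at most $4$. A direct combinatorial count shows this bad set has size strictly less than $2^{2\ell-1}$, yielding the claim.

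The main obstacle is the treatment of carries. Carries from the low-order bits of $X'Y$ into position $t_0+\ell$ depend nonlinearly on both $X'$ and $Y'$, and they spoil the exact linearity in $Y'$ that the comparison argument implicitly uses. The remedy is that the carry is a bounded nonnegative integer, so conditioning on each of its $O(1)$ possible values lets one perform the counting within each case separately, with only a constant-factor inflation in the fiber bound when the cases are recombined. This absorbed constant explains the degradation from $2$-to-$1$ in Lemma~\ref{lem:PFMlemma1} to $4$-to-$1$ here, with the fraction $\tfrac{1}{2}$ emerging from the count of non-interfering $D$.
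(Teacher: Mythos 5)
First, note that the paper does not prove this statement at all: Lemma~\ref{lem:PFMlemma5} is imported verbatim (up to notation) from Lemma~5 of~\cite{PFM:1974}, and the only accompanying text is the remark that the translation of notation is straightforward. So there is no in-paper proof to compare against; what can be judged is whether your reconstruction of the Paterson--Fischer--Meyer argument is sound.

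Your first step (reducing $Z'$ to the middle $\ell$-digit window of $X'\cdot Y'$ plus a term depending only on $\Xfix$ and the high digits of $Y$\!, with a bounded carry into position $t_0+\ell$) is fine and is the standard framing. The gap is in the second step. Writing $Y'=K+D$ and ``relating the fibers of $g_{Y'}$ to those of $g_K$ and $g_D$'' does not work as stated: the fibers of a sum of two functions are not controlled by the fibers of the summands (take $g_D=-g_K$, which collapses $g_{Y'}$ to a constant even though each summand is $2$-to-$1$). Consequently the entire content of the lemma is pushed into your unspecified ``bad set'' of $D$ and the asserted ``direct combinatorial count'' of its size, for which no mechanism is given; that count is precisely the hard part. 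The counting argument that does work (and is the one behind Lemma~5 of~\cite{PFM:1974}) makes no reference to $K$ at all: two values $x_1'\neq x_2'$ can collide only if $(x_1'-x_2')\,Y' \bmod 2^{2\ell}$ has its middle $\ell$-bit window confined to a set of $O(1)$ values determined by the carry slack; for each fixed nonzero difference $d$ the map $Y'\mapsto dY' \bmod 2^{2\ell}$ is close to a bijection, so only $O(2^{\ell})$ of the $2^{2\ell}$ values of $Y'$ are bad for that $d$; summing over the fewer than $2^{\ell+1}$ differences and applying Markov's inequality shows that at least half of all $Y'$ admit so few colliding pairs that no fiber can exceed size $4$. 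Both the constant $4$ and the fraction $\tfrac{1}{2}$ emerge from this averaging over $Y'$, not from a comparison with the retrorse number of Lemma~\ref{lem:PFMlemma1}.
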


\begin{lemma}
    \label{lem:retrorse}
    If $\ell$ is a power of two, then for a random $Y \in [q^n]$, $\expectinline{I_{Y,\ell}}\geq \ell/2$, and for $Y=K_{q,n}$, $I_{Y,\ell}=\ell$.
\end{lemma}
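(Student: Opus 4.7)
The plan is to reduce both halves of the claim to the binary statements by Paterson, Fischer and Meyer (Lemmas~\ref{lem:PFMlemma1} and~\ref{lem:PFMlemma5}) via the bijection between base-$q$ and binary representations given by $q=2^\delta$. Under this bijection, multiplication is preserved and a base-$q$ window of $m$ digits corresponds to a binary window of $m\delta$ bits; for $m\in\{\ell,2\ell\}$ this gives binary window lengths $\ell\delta$ and $2\ell\delta$, both powers of two since $\ell$ and $\delta$ are powers of two by assumption. This is the hypothesis required by the PFM lemmas.

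For $Y=K_{q,n}$, the remark preceding Theorem~\ref{thm:mult} identifies the binary expansion of $K_{q,n}$ with $K_{2,n\delta}$. I would then apply Lemma~\ref{lem:PFMlemma1} in binary with interval length $\ell\delta$ to obtain at most two binary preimages of each $Z'$, which via the bijection yields at most $2\leq 4$ base-$q$ preimages of each base-$q$ $Z'$, so $Y'$ is retrorse and $I_{Y,\ell}=\ell$. For uniform $Y\in[q^n]$, the binary representation of $Y$ is uniform in $\{0,1\}^{n\delta}$, so the binary expansion of $Y'=Y[0\twodots 2\ell-1]$ is uniform in $\{0,1\}^{2\ell\delta}$. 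Applying Lemma~\ref{lem:PFMlemma5} in binary with interval length $\ell\delta$ gives that, for any fixed $t_0$, $\Xc$ and $Y[2\ell\twodots n-1]$, at least half of the binary values of $Y'$ satisfy the four-to-one property; via the bijection at least half of the base-$q$ values of $Y'$ do likewise, so $\expectinline{I_{Y,\ell}}\geq \ell/2$.

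The main obstacle is that the paper's definition of retrorse is universally quantified over all choices of $(t_0,\Xc,Y[2\ell\twodots n-1])$, while Lemma~\ref{lem:PFMlemma5} is stated for a single such fixed choice. The cleanest way to bridge this is to write $X = X_H\,q^{t_1+1}+X'\,q^{t_0}+X_L$ and observe that, with $\Xc$ and $Y[2\ell\twodots n-1]$ fixed, the dependence of $Z'$ on $X'$ is an additive constant shift of a function of $X'$ and $Y'$ alone, up to a single low-order carry from positions $<t_1+1$ whose presence can at most double the preimage count of any $Z'$. Consequently the set of ``good'' $Y'$ is essentially independent of the fixed data (absorbing the extra factor of two into the four-to-one bound), which is exactly what is needed to lift the per-instance PFM guarantee to the universal retrorse property and complete the proof.
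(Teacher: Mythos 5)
Your first two paragraphs follow the paper's proof of this lemma essentially verbatim: pass from base $q$ to binary (the windows of $\ell$ and $2\ell$ digits become $\delta\ell$ and $2\delta\ell$ bits, powers of two because $\delta$ and $\ell$ are), use $K_{q,n}=K_{2,\delta n}$ together with Lemma~\ref{lem:PFMlemma1} to get $I_{Y,\ell}=\ell$ in the fixed case, and apply Lemma~\ref{lem:PFMlemma5} to a uniform $Y$ to get $\expectinline{I_{Y,\ell}}\geq\ell/2$. The quantifier issue you raise in your third paragraph is one the paper does not bridge at all: it applies Lemma~\ref{lem:PFMlemma5} directly, reading it (in line with the original Lemma~5 of~\cite{PFM:1974}) as exhibiting a single set of at least half of the values of $Y'$ that is good simultaneously for every choice of $t_0$, $\Xc$ and the high digits of $Y$ --- that is, retrorseness is treated as a property of the segment $Y'$ alone --- so no per-context-to-universal lift is attempted there.

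Your proposed bridge, however, does not prove the lemma as stated. With the context fixed, $Z'$ is the middle block of $\ell$ digits of $X'Y'+c$ for a context-dependent constant $c$, and the carry observation says this block equals the corresponding block of $X'Y'$ plus a constant plus a carry bit depending on $X'$. Transferring a four-to-one bound from the one context in which Lemma~\ref{lem:PFMlemma5} was invoked to an arbitrary other context therefore incurs the carry ambiguity twice (once to strip the old constant, once to attach the new one): the preimage of a value under the new-context map is contained in the union of fibres of the old-context map whose carry offsets differ by $-1$, $0$ or $+1$, so the honest bound is twelve-to-one, and even your own ``at most double'' accounting only gives eight-to-one. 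Either way this exceeds the ``at most four different values of $X'$'' required by the paper's definition of retrorse, so the claims $I_{Y,\ell}=\ell$ and $\expectinline{I_{Y,\ell}}\geq\ell/2$ are not established by this route; your assertion that the set of good $Y'$ is ``essentially independent of the fixed data'' is exactly what needs proof and is not delivered by the carry argument. The argument could be salvaged by relaxing the constant in the definition of retrorse and adjusting Lemma~\ref{lem:mult-lower} and Corollary~\ref{cor:mult-IT} accordingly (Theorem~\ref{thm:mult} would survive with worse constants), but the cleaner fix is the paper's: rely on the context-independence already built into the PFM lemma.
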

\begin{proof}
    Suppose first that $Y=K_{q,n}$. Let $\ell$ be a power of two and $t_0$ a non-negative integer. We define $X'$, $Y'$ and $Z'$ as before (see Figure~\ref{fig:retrorse}). Instead of writing the numbers in base $q$, we consider their binary expansions, in which each digit is represented by $\delta=\log_2 q$ bits. In binary, we can write $X$, $Y$ and $Z$ as in Figure~\ref{fig:retrorse} if we replace $n$, $t_0$ and $\ell$ with $\delta n$, $\delta t_0$ and $\delta\ell$, respectively. Note that $\delta\ell$ is a power of two. Since $K_{q,n}=K_{2,\delta n}$, it follows immediately from Lemma~\ref{lem:PFMlemma1} that $Y'$ is retrorse and hence $I_{Y,\ell}=\ell$.

    Suppose now that $Y$ is chosen uniformly at random from $[q^n]$, hence $Y'$ is a random number in $[q^{2\ell}]$. From Lemma~\ref{lem:PFMlemma5} it follows that $Y'$ is retrorse with probability at least a half. Thus, $\expectinline{I_{Y,\ell}}\geq \ell/2$.
\end{proof}

\begin{proof}[Proof of Theorem~\ref{thm:mult}]
    We assume that $n$ is a power of two. Let $Y$ be a random number in $[q^n]$, either under the uniform distribution or the distribution in which $K_{q,n}$ has probability one and every other number has probability zero. A lower bound on the running time is obtained by taking the expectation of the bound in the statement of Lemma~\ref{lem:mult-time}. Using linearity of expectation and applying Lemma~\ref{lem:retrorse} and Fact~\ref{fac:leaves} finish the proof. Note from Lemma~\ref{lem:retrorse} that the expected value $\expectinline{I_{Y,\ell}}= \ell$ when $Y=K_{q,n}$.
\end{proof}

\section{Acknowledgements}
We are grateful to Mihai \Patrascu for suggesting the connection between online lower bounds and the recent cell-probe results for dynamic data structures and for very helpful discussions on the topic. We would also like to thank Kasper Green Larsen for the observation that our lower bounds are in fact tight within the cell-probe model.  MJ was supported by the EPSRC.

\printbibliography

\end{document}